
\documentclass[letterpaper, 10 pt, conference]{ieeeconf}
\IEEEoverridecommandlockouts
\overrideIEEEmargins  
\pdfminorversion=4
\usepackage{amsmath}
\usepackage{amssymb}
\usepackage{graphicx}
\usepackage[colorlinks=true, allcolors=blue]{hyperref}
\usepackage[permil]{overpic}
\usepackage{caption}
\usepackage{subcaption}
\newtheorem{assumption}{Assumption}
\newtheorem{theorem}{Theorem}

\newtheorem{lemma}{Lemma}

\everymath{\displaystyle}
\graphicspath{ {Figures/} }

\title{\LARGE \bf Real-Time Sensor-Based Feedback Control for Obstacle Avoidance in Unknown Environments}
\author{Lyes Smaili and Soulaimane Berkane, \IEEEmembership{Senior Member, IEEE}
\thanks{This research work is supported in part by NSERC-DG RGPIN-2020-04759 and Fonds de recherche du Qu\'ebec (FRQ).}
\thanks{The authors are with the Department of Computer Science and Engineering, University of Quebec in Outaouais, Gatineau, QC, Canada. {\tt\small smal01@uqo.ca}, {\tt\small Soulaimane.Berkane@uqo.ca}. S. Berkane is also with the Department of Electrical Engineering, Lakehead University, ON, Canada.}
}

\date{August 2023}

\begin{document}

\maketitle
\thispagestyle{empty}
\pagestyle{empty}

\begin{abstract}
 We revisit the Safety Velocity Cones (SVCs) obstacle avoidance approach for real-time autonomous navigation in an unknown $n$-dimensional environment. We propose a locally Lipschitz continuous implementation of the SVC controller using the distance-to-the-obstacle function and its gradient. We then show that the proposed implementation guarantees safe navigation in generic environments and almost globally asymptotic stability (AGAS) of the desired destination when the workspace contains strongly convex obstacles.  The proposed computationally efficient control algorithm can be implemented onboard vehicles equipped with limited range sensors ({\it e.g.,} LiDAR, depth camera), allowing the controller to be locally evaluated without requiring prior knowledge of the environment.  
\end{abstract}

\section{Introduction}
\subsection{Motivation and Prior Works}
Given its valuable use in diverse applications, the design of autonomous navigation systems is a highly addressed topic in robotics. For a robot to navigate safely in an environment cluttered with obstacles, it is essential to devise an effective control strategy capable of resolving the obstacle avoidance problem. One of the simplest and computationally efficient techniques is the artificial potential fields approach \cite{khabib1986} that allows real-time obstacle avoidance. However, even for simple environments, the constructed potential-field may admit many local minima \cite{Koditschek1987}. To solve this issue, different global approaches have been proposed. The navigation functions approach \cite{KODITSCHEK1990}, when applied to topologically simple environments like Euclidean sphere worlds \cite{KoditschekRimon1992}, solves the problem of local minima through an appropriate parameter tuning, and ensures almost global stability of the target location. The navigation functions approach can be extended to more generic environments by applying some diffeomorphic mappings \cite{KoditschekRimon1992}, \cite{LiCailiTanner2019}. Other methods such as the navigation transform \cite{Loizou2017} and the prescribed performance control \cite{Vrohidis2018} do not necessitate any parameter tuning to eliminate the local minima. A feedback control strategy has been proposed in \cite{cheniouni2023safe} to ensure safety while avoiding obstacles through the shortest path. Hybrid feedback has been used in works such as \cite{sanfelice2006robust,berkane2019hybrid,casau2019hybrid,berkane2021obstacle} to remove the hassle of undesired equilibria and ensure global asymptotic stability. However, global methods require complete prior knowledge of the environment.

On the other hand, reactive methods allow real-time navigation in unknown environments, an important autonomy feature required for contemporary applications of autonomous robotics. The navigation function-based methods were extended to navigate in unknown and topologically simple environments in \cite{Lionis2007} and \cite{Filippidis2011}. The sensor-based method in \cite{arslan2019sensor} makes use of separating hyperplanes to identify the local free space of the robot, and it guarantees almost global asymptotic stability in environments filled with separated and strongly convex obstacles. Navigation through safety velocity cones \cite{berkane2021Navigation} makes use of Nagumo's invariance theorem \cite{Nagumo} by projecting the nominal controls (nominal velocities) onto the Bouligand's tangent cones \cite{bouligand1932introduction} to ensure safety for general spaces. However, even for Euclidean sphere worlds, the discontinuous approach in \cite{berkane2021Navigation} might result in saddle points with a basin of attraction that is not of measure zero. The hybrid feedback control approaches in \cite{sawant2023hybrid} and \cite{Sawant2023} allow safe navigation in unknown two-dimensional environments with convex and non-convex obstacles, respectively. 

\subsection{Contributions of the Paper}
In this present paper, we revisit our previously proposed approach in \cite{berkane2021Navigation} which uses safety velocity cones to guarantee safety in arbitrary environments. In this work, we focus our attention on providing convergence guarantees when navigating an $n$-dimensional unknown environment filled with strongly convex obstacles; a similar setting to \cite{arslan2019sensor}.  We summarize the contributions of this paper as follows:

\begin{enumerate}
    \item By construction, following \cite{berkane2021Navigation}, our controller guarantees safety and progress towards the target in very generic environments (not only convex). This is a very appealing feature compared to most of the proposed algorithms that are tailored usually to the specific setting, \textit{e.g.,} \cite{arslan2019sensor,sawant2023hybrid,cheniouni2023safe}.
    \item By considering a smoothed version of the discontinuous controller in \cite{berkane2021Navigation}, and for sufficiently convex and smooth obstacles, we prove that the closed-loop dynamical system admits a unique solution that converges safely to the exponentially stable desired destination from almost all initial conditions in the free space. 
    \item Our proposed controller can be evaluated without any prior knowledge about the environment. The controller is computationally efficient and suitable for real-time implementation as it requires only measurements of the range and bearing to the nearest obstacle (obtained, for example, using range scanners). 
\end{enumerate}
    
\subsection{Organization of the Paper}
This paper is organized as follows. In Section \ref{section:GeneralWorkspaces} we define the workspace in general terms, as well as some assumptions on its topology. In Section \ref{section:DistanceBasedSmoothController} we formulate the problem, and we present the smooth controller, with the related notions such as safety and convergence. In Section \ref{section:GeneralWorkspaces} we present the convex sphere worlds, and we prove almost global asymptotic stability when having strongly convex obstacles. In Section \ref{section:NumericalSimulation} we demonstrate the effectiveness of our navigation algorithm via numerical simulations in 2D and 3D environments. In Section \ref{section:conclusion} we conclude with a summary of our work, and we discuss related future works.

\subsection{Notation}
We denote by $\mathbb{R}$, $\mathbb R_{>0}$ and $\mathbb N$, respectively, the set of reals, positive reals and natural numbers. We denote by $\mathbb R^n$ the $n$-dimensional Euclidean space and by $\mathbb S^{n-1}$ the $(n-1)$-dimensional unit sphere embedded in $\mathbb R^n$, with $n\in\mathbb N$. We denote the Euclidean norm of a vector $x\in\mathbb R^n$ by $||x||$. For a subset $\mathcal{A}\subset\mathbb R^n$, we denote by $\textbf{int}(\mathcal{A})$, $\partial\mathcal{A}$, $\overline{\mathcal{A}}$ and $\complement\mathcal{A}$, respectively, its topological interior, boundary, closure and complement in $\mathbb R^n$. We denote the Euclidean ball of radius $r>0$ centered at $x$ by $\mathcal{B}(x,r):=\{y\in\mathbb R^n: ||x-y||<r\}$. The distance from a point $x\in\mathbb R^n$ to a closed set $\mathcal{A}\subset\mathbb R^n$ is given by $\textbf{d}_\mathcal{A}(x):=\inf _{y\in\mathcal{A}}||y - x||$.  For two sets $\mathcal{A},\mathcal{B}\subset\mathbb{R}^n$, the distance between them is given by
$\mathbf{d}_{\mathcal{A},\mathcal{B}}:= \inf _{x\in\mathcal{A},y\in\mathcal{B}}||y - x||$. The projection of $x\in\mathbb{R}^n$ onto $\mathcal{A}\subset\mathbb R^n$ is given by $\textbf{P}_\mathcal{A}(x):=\{y\in\overline{\mathcal{A}}: ||y-x||=\textbf{d}_\mathcal{A}(x)\}$. If the projection $\textbf{P}_{\partial\mathcal{A}}(x)$ is unique for some $x\in\textbf{int}(\mathcal{A})$, the inward normal vector of the set $\mathcal{A}$ at $\textbf{P}_{\partial\mathcal{A}}(x)$ is given by the gradient of the distance function $\textbf{d}_{\complement\mathcal{A}}(x)$ (see \cite[theorem 3.3, chap 6]{delfour2011shapes}):
\begin{equation}
    \nabla\textbf{d}_{\complement\mathcal{A}}(x):=\frac{x-\textbf{P}_{\partial\mathcal{A}}(x)}{||x-\textbf{P}_{\partial\mathcal{A}}(x)||},\hspace{5mm}\forall x\in\textbf{int}(\mathcal{A}).
\end{equation}
We define the oriented distance function as $\mathbf{b}_\mathcal{A}(x):= \textbf{d}_\mathcal{A}(x)-\textbf{d}_{\complement\mathcal{A}}(x)$, see \cite[definition 2.1, chap 7]{delfour2011shapes}. The gradient of the oriented distance function, for all $x\in \mathbb R^n\setminus\{\partial\mathcal{A}\}$ and for all $x$ such that $\textbf{P}_{\partial\mathcal{A}}(x)$ is unique, is given by \cite[theorem 3.1, chap. 7]{delfour2011shapes}

\begin{equation}
    \nabla\textbf{b}_{\complement\mathcal{A}}(x):=\left\{
    \begin{array}{cc}
      -\frac{x-\textbf{P}_{\partial\mathcal{A}}(x)}{||x-\textbf{P}_{\partial\mathcal{A}}(x)||},   &  x\in\textbf{int}(\complement\mathcal{A}),\\
      
      \frac{x-\textbf{P}_{\partial\mathcal{A}}(x)}{||x-\textbf{P}_{\partial\mathcal{A}}(x)||},   &  x\in\textbf{int}(\mathcal{A}).   
    \end{array}
    \right.
\end{equation}
We denote by $\mathbf{Sk}(\mathcal{A})$, the skeleton of $\mathcal{A}$, a set of all points of $\mathbb R^n$ whose projection onto $\overline{\mathcal{A}}$ is not unique, defined as
\begin{equation}
    \mathbf{Sk}(\mathcal{A}):=\{x\in\mathbb R^n:\mathbf{card}(\mathbf{P}_{\overline{\mathcal{A}}}(x))>1\}.
\end{equation}

    For a non-empty set $\mathcal{A}$, the reach of $\mathcal{A}$ at $x\in\overline{\mathcal{A}}$ is defined as
    \begin{align}
        &\mathbf{reach}(\mathcal{A},x):=\nonumber\\
        &\left\{\begin{array}{ll}
          0,   &  x\in\partial\overline{\mathcal{A}}\cap\overline{\mathbf{Sk}(\mathcal{A})},\\
          \sup\{r>0:\mathbf{Sk}(\mathcal{A})\cap\mathcal{B}(x,r)=\emptyset \},   & \text{otherwise.}
        \end{array}\right.
    \end{align}
    The reach of the set $\mathcal{A}$ is given by \cite[Definition 6.1, chap. 6]{delfour2011shapes}
    \begin{equation}
        \mathbf{reach}(\mathcal{A}):=\inf_{x\in\mathcal{A}}\{\mathbf{reach}(\mathcal{A},x)\}.
    \end{equation}
    The set $\mathcal{A}$ has {\it positive reach} if $\mathbf{reach}(\mathcal{A})>0$.

Let $f:\mathbb R^n\to\mathbb R^m$ be a vector-valued function, where $f(x)=[f_1(x),f_2(x),\cdots,f_m(x)]^\top$. The Jacobian matrix of the function $f$ with respect to $x=[x_1,x_2,\cdots,x_n]^\top$ is an $m\times n$ matrix defined as
\begin{equation}
    \textbf{J}_f(x):=\begin{bmatrix}
        \frac{\partial f_1}{\partial x_1}& \frac{\partial f_1}{\partial x_2}&\cdots&\frac{\partial f_1}{\partial x_n}\\
        \frac{\partial f_2}{\partial x_1}& \frac{\partial f_2}{\partial x_2}&\cdots&\frac{\partial f_2}{\partial x_n}\\
        \vdots& \vdots&\ddots&\vdots\\
        \frac{\partial f_m}{\partial x_1}& \frac{\partial f_m}{\partial x_2}&\cdots&\frac{\partial f_m}{\partial x_n}
    \end{bmatrix}.
\end{equation}
We denote by $\textbf{I}_n$ the $n\times n$ identity matrix.
Finally, we denote by $\textbf{T}_\mathcal{A}(x)$ the tangent cone to $\mathcal{A}$ at a point $x\in\mathbb R^n$ \cite{bouligand1932introduction}, which is given by
\begin{equation}
    \mathbf{T}_\mathcal{A}(x)=\big\{z\in\mathbb R^n:\lim_{\tau\to 0^+}\frac{\textbf{d}_{\mathcal{A}}(x+\tau z)}{\tau}=0\big\}.
\end{equation}

\section{Problem Formulation}\label{section:GeneralWorkspaces}
Let $\mathcal{W}$ be a closed subset of the $n$-dimensional Euclidean space $\mathbb{R}^n$, which bounds the workspace. Consider $M$ smaller open sets $\mathcal{O}_i$, $i=1,...,M$ in $\mathbb{R}^n$, strictly contained in the interior of $\mathcal{W}$, that describes the obstacles. We denote by $\mathcal{O}_0:=\partial\mathcal{W}$ the boundary obstacle and we let $\mathbb{M}:=\{0,1,\cdots,M\}$. The free space can be described by the closed set $\mathcal{X}$, which is given by
\begin{equation}
    \mathcal{X}:=\mathcal{W}\setminus \bigcup_{i=1}^{M}\mathcal{O}_i.
\end{equation}

The complement of the free space set, \textit{i.e.,} the set $\complement\mathcal{X}$, represents the obstacle region. Let $\mathbf{reach}(\mathcal{X})$ be the reach of the set $\mathcal{X}$ and consider the following assumption:
\begin{assumption}[Positive Reach set]\label{assumption:positive reach}
   The set $\mathcal{X}$ has a positive reach, {\it i.e.},
   \begin{equation}
       \mathbf{reach}(\mathcal{X})>0.
   \end{equation}
\end{assumption}

In other words, there exists a positive real $h>0$ such that any point $x\in\mathcal{X}$, with $\mathbf{d}_{\complement\mathcal{X}}\left(x\right)<h$, has a unique projection $\mathbf{P}_{\partial\mathcal{X}}\left(x\right)$ \cite[Theorem
6.3, Chap. 6]{delfour2011shapes}. If Assumption \ref{assumption:positive reach} holds, the following inequality is true
\begin{equation}
        \begin{array}{cc}
            \mathbf{d}_{\mathcal{O}_i,\mathcal{O}_j}>2h,   & \forall i,j\in\mathbb{M} \textrm{ with }i\ne j,
        \end{array}
\end{equation}
which means that the obstacles are separated by, at least, a distance of $2h$.

We consider a ball-shaped robot centred at $x\in \mathbb{R}^n$ with radius $R>0$, and we define the {\it practical free space} as follows\footnote{The controller can be equivalently formulated using the regular distance function as in \cite{berkane2021Navigation}. However, the use of the oriented distance function proves important to establish the proofs of the main result.}:
\begin{equation}\label{eq:practicalfreespace}
    \mathcal{X}_{\epsilon}:=\{x\in\mathbb{R}^n: \mathbf{b}_{\complement\mathcal{X}}(x)\ge\epsilon\}\subset\mathcal{X},
\end{equation}
where $\epsilon$ is a positive safety margin. For feasibility, the choice of the safety margin should satisfy the condition
\begin{align}\label{condition:1}
    0<R<\epsilon<h.
\end{align}
This condition guarantees for the robot a safe distance ($\epsilon>R$) from the obstacles, and if Assumption \ref{assumption:positive reach} holds, then the obstacles are separated enough for the robot to move freely in between.

\begin{figure}[t]
    \centering
    \includegraphics[width=0.8\columnwidth]{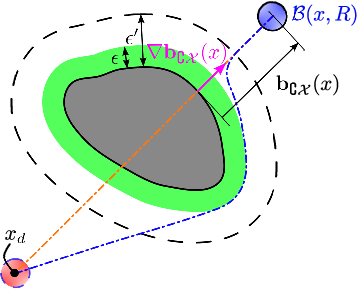}
    \caption{Obstacle avoidance of a ball-shaped robot (blue). The (green) region around the obstacle (gray) is a dilation for the latest by the parameter $\epsilon$ while the (black) dashed line is a dilation by the parameter $\epsilon'$. The (orange) trajectory results from the application of the nominal controller which brings the robot's position to the desired goal $x_d$ (red) in the absence of the obstacle. The (blue) trajectory results from applying the smooth controller which brings the robot's position to the desired goal while avoiding the dilated obstacle.
    } 
    \label{fig:figurePath}
\end{figure}
We consider a robot operating in the free space $\mathcal{X}$ and restricted to stay in the practical free space $\mathcal{X}_{\epsilon}$ defined by (\ref{eq:practicalfreespace}). We assume the following first-order robot dynamics
\begin{equation}\label{eq:dynamicalSystem}
    \dot{x}=u,
\end{equation}
\noindent where $x\in\mathbb{R}^n$ is the position of the robot's center, and $u\in\mathbb R^n$ is the control input (velocity).

The reactive navigation problem consists in finding a Lipschitz continuous controller $u=\kappa(x,x_d,\mathcal{X}_\epsilon)$, $\kappa:\mathbb{R}^n\times\mathbb{R}^n\times\mathbb{R}^n\to\mathbb{R}^n$, such that, for the closed-loop system 
\begin{equation}\label{eq:closedLoopsystem}
    \dot{x}=\kappa(x,x_d,\mathcal{X}_\epsilon),
\end{equation}
the practical free space $\mathcal{X}_{\epsilon}$ is positively invariant, and the robot's position is asymptotically stabilized at a given desired location $x_d\in\mathbf{int}\left(\mathcal{X}_{\epsilon}\right)$. Also, we must guarantee that the control law can be computed in real-time, using only locally known information about the free space $\mathcal{X}$. For the sake of simplicity, we are going to write $\kappa(x)$ instead of $\kappa(x,x_d,\mathcal{X}_\epsilon)$.

\section{Distance-based smooth controller}\label{section:DistanceBasedSmoothController}
\subsection{Feedback Control Design}
A solution to the reactive navigation problem has been derived in \cite{berkane2021Navigation} and consists of solving the following \textit{nearest
point problem}:
\begin{equation}\label{eq:optimization}
\min_u ||u-\kappa_0(x)||^2 \hspace{2mm}\text{subject to} \hspace{2mm}u\in\textbf{T}_{\mathcal{X}_{\epsilon}}(x), \forall x\in\mathcal{X}_\epsilon,
\end{equation}
where $\kappa_0(x)$ is the nominal control law that stabilizes the robot to the target location in the absence of obstacles. 
The motivation behind this optimization problem is to ensure the conditions of Nagumo's theorem for invariance while minimally deviating from the nominal controller (minimally invasive control). In order to satisfy the necessary and sufficient conditions of Nagumo's theorem, the control law $u$ (vector field) must be constrained to the tangent cone (coined {\it safety velocity cone} (SVC) in \cite{berkane2021Navigation}).

The solution to this optimization problem is equivalent to finding the projection of the nominal control $\kappa_0(x)$ onto the tangent cone $\textbf{T}_{\mathcal{X}_{\epsilon}}(x)$. The projection operator is denoted $\mathbf{P}(\kappa_0(x),\mathbf{T}_{\mathcal{X}_{\epsilon}}(x))$. We recognize two cases depending on the position $x$ of the robot. When $x\in\mathbf{int}(\mathcal{X}_\epsilon)$, the tangent cone set is given by $\mathbf{T}_{\mathcal{X}_{\epsilon}}(x)\equiv\mathbb R^n$, and when $x\in\partial\mathcal{X}_\epsilon$, the tangent cone set depends on the shape of the boundary. For arbitrary free spaces, the projection $\mathbf{P}(\kappa_0(x),\mathbf{T}_{\mathcal{X}_{\epsilon}}(x))$ need not to be unique. \\
According to \cite[Thm. 7.1, Chap.7]{delfour2011shapes}, which states that, for any non-empty set $\complement\mathcal{X}$ with positive reach $h$, the dilated set $\complement\mathcal{X}_\epsilon$ is a set of class $\mathcal{C}^{1,1}$, where $\mathcal{X}_\epsilon$ is given by (\ref{eq:practicalfreespace}). Hence, the boundary $\partial\mathcal{X}_\epsilon$ is a $\mathcal{C}^{1,1}$-submanifold of dimension $(n-1)$. Therefore, the tangent cone at any $x\in\partial\mathcal{X}_{\epsilon}$ is given by the half-space
\begin{equation}
    \mathbf{T}_{\mathcal{X}_{\epsilon}}(x)=\{z\in\mathbb{R}^n:v\left(x\right)^\top z\le 0\}, \forall x\in \partial\mathcal{X}_{\epsilon},
\end{equation}
\noindent where $v\left(x\right)$ is the outward normal unit vector associated to each $x\in\partial\mathcal{X}_{\epsilon}$. A half-space is a convex set. Therefore, any vector $\kappa_0\left(x\right)\in\mathbb{R}^n$, defined at $x\in\partial\mathcal{X}_{\epsilon}$, has a unique projection $\mathbf{P}\left(\kappa_0\left(x\right),\mathbf{T}_{\mathcal{X}_{\epsilon}}(x)\right)$ onto the tangent space $\mathbf{T}_{\mathcal{X}_{\epsilon}}(x)$. When $v\left(x\right)^\top\kappa_0\left(x\right)>0$, the projection reduces to the orthogonal projection onto the hyperplane $v\left(x\right)^\top z=0$, which is given by \cite[Chap. 5]{Meyer2000}
\begin{equation}
    \Pi\left(v\left(x\right)\right)\kappa_0\left(x\right):=\left(\mathbf{I}_n-v\left(x\right)v\left(x\right)^\top\right)\kappa_0\left(x\right).
\end{equation}
The resulting control law that solves \eqref{eq:optimization} is given by
\begin{equation}\label{eq:disControl1}
    u= \kappa(x) =\left\{
        \begin{array}{ l l }
    \kappa_0\left(x\right), & x\in\textbf{int}(\mathcal{X}_\epsilon)\hspace{2mm} \text{or}\\
    {}&  v(x)^\top\kappa_0(x)\le0,\\
    \Pi(x)\kappa_0(0), & x\in\partial\mathcal{X}_\epsilon \hspace{2mm}\text{and}\\
    {}&  v(x)^\top\kappa_0(x)\ge0,
  \end{array}
    \right.
\end{equation}
which is a discontinuous vector field at the boundary $\partial\mathcal{X}_\epsilon$ of the practical free space. The discontinuity appears when the nominal controller points outside of the practical free set $\mathcal{X}_\epsilon$. In this case, the nominal controller is projected onto the tangent cone set. Since $\mathcal{X}_\epsilon$ is a dilation of $\mathcal{X}$, then 
\begin{equation}
    \nabla\mathbf{b}_{\complement\mathcal{X}}(x)=\nabla\mathbf{b}_{\complement\mathcal{X}_\epsilon}(x),\hspace{1cm}\forall x\in\mathbf{int}(\mathcal{X}_\epsilon),
\end{equation}
where, $\nabla\mathbf{b}_{\complement\mathcal{X}_\epsilon}(x)$ is the inward normal vector of $\partial\mathcal{X}_\epsilon$ at $\mathbf{P}_{\partial\mathcal{X}_\epsilon}(x)$, $\forall x\in\mathbf{int}(\mathcal{X}_\epsilon)$. Therefore,
\begin{equation}
    \nabla\mathbf{b}_{\complement\mathcal{X}}(x)=-v(\mathbf{P}_{\partial\mathcal{X}_\epsilon}(x)),\hspace{1cm}\forall x\in\mathcal{X}_\epsilon.
\end{equation}

We can rewrite (\ref{eq:disControl1}) in terms of the oriented distance function from the robot position $x$ to the obstacle set $\complement\mathcal{X}$, and using the fact that $v(x)=-\nabla\mathbf{b}_{\complement\mathcal{X}}(x)$, $\forall x\in\partial\mathcal{X}_\epsilon$, as follows
\begin{equation}
  \setlength{\arraycolsep}{5pt}
  \kappa(x)  = \left\{ \begin{array}{ l l }
    \kappa_0\left(x\right), &\mathbf{b}_{\complement	\mathcal{X}}\left(x\right)>\epsilon \hspace{2mm}\text{or} \\
    {}&  \kappa_0\left(x\right)^\top\nabla\mathbf{b}_{\complement	\mathcal{X}}\left(x\right)\ge 0,\\
    \Pi\left(x\right)\kappa_0\left(x\right), &\mathbf{b}_{\complement	\mathcal{X}}\left(x\right)=\epsilon \hspace{2mm}\text{and} \\
    {}&  \kappa_0\left(x\right)^\top\nabla\mathbf{b}_{\complement	\mathcal{X}}\left(x\right)\le 0,\\
  \end{array} \right.
\end{equation}

To get rid of the discontinuity, we propose the following smoothed version inspired from \cite[Appendix A]{Praly1991}
\begin{equation}\label{eq:smoothControl}
  \setlength{\arraycolsep}{5pt}
  \kappa(x) = \left\{ \begin{array}{ l l }
    \kappa_0\left(x\right), &\mathbf{b}_{\complement	\mathcal{X}}\left(x\right)>\epsilon' \hspace{2mm}\text{or} \\
    {}&  \kappa_0\left(x\right)^\top\nabla\mathbf{b}_{\complement	\mathcal{X}}\left(x\right)\ge 0,\\
    \hat{\Pi}\left(x\right)\kappa_0\left(x\right), &\mathbf{b}_{\complement	\mathcal{X}}\left(x\right)\le\epsilon' \hspace{2mm}\text{and} \\
    {}&  \kappa_0\left(x\right)^\top\nabla\mathbf{b}_{\complement	\mathcal{X}}\left(x\right)\le 0,
  \end{array} \right.
\end{equation}

\noindent where $0<R<\epsilon<\epsilon'\le h$ and 
\begin{equation}
    \hat{\Pi}\left(x\right):=\mathbf{I}_n-\phi\left(x\right)\nabla\mathbf{b}_{\complement\mathcal{X}}(x)\nabla\mathbf{b}_{\complement\mathcal{X}}(x)^\top,
\end{equation}

\begin{equation}\label{eq:phi}
    \phi\left(x\right):=\min\left(1,\frac{\epsilon'-\mathbf{b}_{\complement\mathcal{X}}(x)}{\epsilon'-\epsilon}\right).
\end{equation}

The smoothness of $\kappa(x)$ depends on the class of the oriented distance function $\mathbf{b}_{\complement\mathcal{X}}$ which, also, depends on the class of the set $\mathcal{X}$. Therefore, we assume the following smoothness assumption for the free space. 
\begin{assumption}\label{assumption:smoothBoundaries}
    The free space $\mathcal{X}$ is a set of class $\mathcal{C}^{2,l}$, where $0\le l\le 1$.
\end{assumption}
We refer to \cite[Definition 3.1, Chap 2]{delfour2011shapes} to define sets of class $\mathcal{C}^{k,l}$, where $k\ge 1$ is an integer and $0\le l\le 1$ is a real. 
\begin{lemma}
    Consider the practical free space set $\mathcal{X}_\epsilon$. Under Assumption \ref{assumption:smoothBoundaries}, and assuming $\kappa_0$ is locally Lipschitz-continuous, the smoothed control $\kappa(x)$ given by (\ref{eq:smoothControl})-(\ref{eq:phi}) is locally Lipschitz-continuous.
\end{lemma}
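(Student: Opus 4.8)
The plan is to rewrite the piecewise controller \eqref{eq:smoothControl} as a single closed-form expression assembled from locally Lipschitz pieces, and then invoke the standard calculus of locally Lipschitz functions (sums, products of locally bounded factors, and compositions with globally Lipschitz maps such as $\min$, $\max$ and saturations). First I would record the algebraic fact that the two branches of \eqref{eq:smoothControl} agree on the switching set: where $\mathbf{b}_{\complement\mathcal{X}}(x)=\epsilon'$ one has $\phi(x)=0$, hence $\hat{\Pi}(x)=\mathbf{I}_n$ and $\hat{\Pi}(x)\kappa_0(x)=\kappa_0(x)$; and where $\kappa_0(x)^\top\nabla\mathbf{b}_{\complement\mathcal{X}}(x)=0$ the correction $\phi(x)\,\nabla\mathbf{b}_{\complement\mathcal{X}}(x)\,\nabla\mathbf{b}_{\complement\mathcal{X}}(x)^\top\kappa_0(x)$ vanishes, so again $\hat{\Pi}(x)\kappa_0(x)=\kappa_0(x)$. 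Thus $\kappa$ is single-valued and continuous across the interface between the two cases.

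Building on this, on the tube $\{x:\mathbf{b}_{\complement\mathcal{X}}(x)<h\}$ where the unit normal is well defined I would write
$$\kappa(x)=\kappa_0(x)-\bar{\phi}(x)\,\sigma(x)\,\nabla\mathbf{b}_{\complement\mathcal{X}}(x),$$
with $\sigma(x):=\min\big(0,\kappa_0(x)^\top\nabla\mathbf{b}_{\complement\mathcal{X}}(x)\big)$ and $\bar{\phi}(x):=\max\big(0,\min(1,(\epsilon'-\mathbf{b}_{\complement\mathcal{X}}(x))/(\epsilon'-\epsilon))\big)$, and verify by a short case check that this reproduces \eqref{eq:smoothControl} exactly: when $\mathbf{b}_{\complement\mathcal{X}}(x)>\epsilon'$ we get $\bar\phi=0$, when $\kappa_0^\top\nabla\mathbf{b}_{\complement\mathcal{X}}\ge0$ we get $\sigma=0$ (both giving $\kappa=\kappa_0$), and when $\mathbf{b}_{\complement\mathcal{X}}(x)\le\epsilon'$ with $\kappa_0^\top\nabla\mathbf{b}_{\complement\mathcal{X}}\le0$ we recover $\bar\phi=\phi$ and $\sigma=\kappa_0^\top\nabla\mathbf{b}_{\complement\mathcal{X}}$, i.e. $\hat{\Pi}(x)\kappa_0(x)$. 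The point of this rewriting is that the clamps $\bar\phi$ and $\sigma$ absorb the case distinctions into compositions with the $1$-Lipschitz maps $t\mapsto\max(0,\min(1,t))$ and $t\mapsto\min(0,t)$, replacing a switch by a product.

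Next I would collect the regularity of the ingredients. By hypothesis $\kappa_0$ is locally Lipschitz, hence locally bounded. The scalar $\mathbf{b}_{\complement\mathcal{X}}$ is globally $1$-Lipschitz, so $\bar\phi$, a composition of $\mathbf{b}_{\complement\mathcal{X}}$ with an affine map and the Lipschitz clamp, is locally Lipschitz with values in $[0,1]$. Under Assumption~\ref{assumption:smoothBoundaries} and the cited regularity of the oriented distance function in a tube of positive reach, $\mathbf{b}_{\complement\mathcal{X}}\in\mathcal{C}^{2,l}$ on a neighborhood of $\{x:\mathbf{b}_{\complement\mathcal{X}}(x)\le\epsilon'\}$ (recall $\epsilon'\le h$), so $\nabla\mathbf{b}_{\complement\mathcal{X}}$ is $\mathcal{C}^{1,l}$, in particular locally Lipschitz and of unit norm there; consequently $\sigma$, the composition of the locally Lipschitz map $x\mapsto\kappa_0(x)^\top\nabla\mathbf{b}_{\complement\mathcal{X}}(x)$ with $\min(0,\cdot)$, is locally Lipschitz. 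Since products of locally Lipschitz, locally bounded functions are locally Lipschitz, the correction term is locally Lipschitz on that neighborhood, and hence so is $\kappa$.

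Finally I would conclude by a local argument around an arbitrary $x_0\in\mathcal{X}_\epsilon$. If $\mathbf{b}_{\complement\mathcal{X}}(x_0)>\epsilon'$, then on a neighborhood $\bar\phi\equiv0$ and $\kappa\equiv\kappa_0$, which is locally Lipschitz without any reference to $\nabla\mathbf{b}_{\complement\mathcal{X}}$ (this covers points outside the reach tube, where the normal need not be defined). If $\mathbf{b}_{\complement\mathcal{X}}(x_0)\le\epsilon'\le h$, then $x_0$ lies in the tube where the projection is unique, the closed-form expression is valid on a full neighborhood, and the product and composition rules deliver the Lipschitz estimate. I expect the main obstacle to be precisely the transition set $\{\mathbf{b}_{\complement\mathcal{X}}=\epsilon'\}\cup\{\kappa_0^\top\nabla\mathbf{b}_{\complement\mathcal{X}}=0\}$: naively the controller looks discontinuous there, and the crux is observing that the correction vanishes on this set because \emph{both} $\phi$ (through the factor $\epsilon'-\mathbf{b}_{\complement\mathcal{X}}$) and the clamped inner product $\sigma$ tend to zero, which is exactly what lets the two branches be fused into one expression governed by Lipschitz-preserving clamps rather than a genuine switch.
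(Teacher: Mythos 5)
Your proof is correct, but it takes a genuinely different route from the paper's. The paper follows the scheme of Praly's Lemma~103: it introduces the two-argument projection map $\mathrm{Proj}(x,\kappa)$ in \eqref{eq:projection}, bounds its Jacobian on compact subsets of the open ``active'' region $\mathcal{S}$ as in \eqref{eq:jacobianKappaSmooth}, and then establishes the Lipschitz estimate by a four-case analysis of the segment joining $(x_a,\kappa_a)$ to $(x_b,\kappa_b)$, applying the mean value theorem on the sub-segments lying in $\mathcal{S}$ and using that $\mathrm{Proj}=\kappa$ on the complement. You instead fuse the two branches into the single closed form $\kappa(x)=\kappa_0(x)-\bar\phi(x)\,\sigma(x)\,\nabla\mathbf{b}_{\complement\mathcal{X}}(x)$, with the case distinctions absorbed into the $1$-Lipschitz clamps $t\mapsto\min(0,t)$ and $t\mapsto\max(0,\min(1,t))$, and then invoke the closure of locally Lipschitz functions under sums, products of locally bounded factors, and composition. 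Your identity checks out (both disjuncts of the first branch force the correction term to vanish, and on the second branch $\bar\phi=\phi$ and $\sigma=\kappa_0^\top\nabla\mathbf{b}_{\complement\mathcal{X}}$, recovering $\hat\Pi(x)\kappa_0(x)$), and your argument is arguably more elementary and self-contained: it avoids the segment-splitting entirely and makes the continuity across the switching set an algebraic triviality rather than a limiting statement. What the paper's approach buys in exchange is generality: the Jacobian-bound-plus-segment argument does not require the projection to admit such a clean single-formula representation, so it would survive modifications of $\hat\Pi$ for which no global closed form exists. One small point you should tighten: since the paper allows $\epsilon'\le h$ with equality, a neighborhood of a point with $\mathbf{b}_{\complement\mathcal{X}}(x_0)=\epsilon'=h$ may contain points where $\nabla\mathbf{b}_{\complement\mathcal{X}}$ is undefined, so there the product rule should be replaced by the observation that $\bar\phi$ vanishes identically on $\{\mathbf{b}_{\complement\mathcal{X}}\ge\epsilon'\}$ while $\sigma\nabla\mathbf{b}_{\complement\mathcal{X}}$ stays bounded where defined, and $|\bar\phi(x)|\le\|x-y\|/(\epsilon'-\epsilon)$ for any $y$ with $\mathbf{b}_{\complement\mathcal{X}}(y)\ge\epsilon'$ by the $1$-Lipschitz property of the oriented distance; this closes the estimate across that interface without ever differentiating the normal outside the reach tube.
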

\begin{proof}
    This proof is inspired from the proof of Lemma 103 presented in \cite[Appendix A]{Praly1991}. Firstly, we have that $\mathbf{b}_{\complement\mathcal{X}}(x)$, under Assumption \ref{assumption:smoothBoundaries}, is a twice continuously differentiable function. In fact, according to \cite[Theorem. 8.2, Chap. 7] {delfour2011shapes}, if the free space $\mathcal{X}$ is a $\mathcal{C}^{k,l}$-class set, then
        \begin{equation*}
            \forall x\in\partial\mathcal{X}, \exists \rho>0\hspace{2mm} \text{such that}\hspace{2mm} \mathbf{b}_{\complement\mathcal{X}}\in\mathcal{C}^{k,l}(\overline{\mathcal{B}(x,\rho)}).
        \end{equation*}

Let $\mathrm{Proj}(x,\kappa): \mathcal{X}_\epsilon\times\mathbb R^n\to\mathbb R^n$ be a projection map defined as follows:

\begin{equation}\label{eq:projection}
    \mathrm{Proj}(x,\kappa)= \left\{ \begin{array}{ l l }
    \kappa, &\mathbf{b}_{\complement	\mathcal{X}}\left(x\right)>\epsilon' \hspace{2mm}\text{or} \\
    {}&  \kappa^\top\nabla\mathbf{b}_{\complement	\mathcal{X}}\left(x\right)\ge 0,\\
    \hat{\Pi}\left(x\right)\kappa, &\mathbf{b}_{\complement	\mathcal{X}}\left(x\right)\le\epsilon' \hspace{2mm}\text{and} \\
    {}&  \kappa^\top\nabla\mathbf{b}_{\complement	\mathcal{X}}\left(x\right)\le 0,
  \end{array} \right.
\end{equation}
such that, when $\kappa=\kappa_0(x)$ one has $\mathrm{Proj}(x,\kappa_0(x))=\kappa(x)$. We denote by $\mathcal{S}$ the following open subset of  $\mathcal{X}_\epsilon\times\mathbb R^n$
        \begin{equation}
            \mathcal{S}=\{(x,\kappa): \mathbf{b}_{\complement	\mathcal{X}}\left(x\right)<\epsilon', \kappa^\top\nabla\mathbf{b}_{\complement	\mathcal{X}}\left(x\right)< 0\}.
        \end{equation}
        Then the function $\mathrm{Proj}(x,\kappa)$ is continuously differentiable at $\mathcal{S}$. The projection $\mathrm{Proj}(x,\kappa)$ tends to $\kappa$ as $\mathbf{b}_{\complement\mathcal{X}}(x)$ tends to $\epsilon'$ or as $ \kappa^\top\nabla\mathbf{b}_{\complement	\mathcal{X}}\left(x\right)$ tends to $0$. For any compact subset $\mathcal{C}$ of $\mathcal{S}$, there exists a constant $k_\mathcal{C}$ such that the Jacobian matrix:
        \begin{equation}\label{eq:jacobianKappaSmooth}
            \|\mathbf{J}_\mathrm{Proj}(x,\kappa) \|\le k_\mathcal{C},\hspace{2mm}\forall (x,\kappa)\in\mathcal{C}.
        \end{equation}
    Let $(x_a,\kappa_{a})$ and $(x_b,\kappa_{b})$ be two distinct points such that, for any $\alpha\in[0,1]$, the point $(x_\alpha,\kappa_{\alpha})$ is in the set $\mathcal{X}_\epsilon\times\mathbb R^n$, with:
    \begin{align}
        x_\alpha &=\alpha x_b+(1-\alpha)x_a,\text{  and  }\kappa_{\alpha}=\alpha \kappa_{b}+(1-\alpha)\kappa_{a}.
    \end{align}
    We distinguish four cases:
    \begin{enumerate}
        \item $(x_a,\kappa_{a})$ and $(x_b,\kappa_{b})$ are not in $\mathcal{S}$. Trivially, we have that:
        \begin{equation}
            \|\mathrm{Proj}(x_b,\kappa_{b})-\mathrm{Proj}(x_a,\kappa_{a})\|=\|\kappa_{b}-\kappa_{a}\|.
        \end{equation}
        \item For all  $\alpha\in [0, 1]$, $(x_\alpha, \kappa_{\alpha})$ lies in $\mathcal{S}$. Then, using the Mean Value Theorem, we get:
        \begin{align}
            \|\mathrm{Proj}(x_b,\kappa_{b})&-\mathrm{Proj}(x_a,\kappa_{a})\|   \nonumber\\
                {} &\le k_\mathcal{C}[\|x_b-x_a\|+\|\kappa_{b}-\kappa_{a}\|],
        \end{align}
        where $k_\mathcal{C}$ is given by (\ref{eq:jacobianKappaSmooth}).
        \item  When $(x_a, \kappa_{a})$ belongs to $\mathcal{S}$ but $(x_b, \kappa_{b})$ does not. Then, we define $\alpha^*$ by:
        \begin{equation}\label{eq:alphaStar}
            \alpha^*=\min_{\begin{array}{c}
                 0\le \alpha\le 1\\
                  (x_\alpha, \kappa_{\alpha})\notin \mathcal{S}
            \end{array}} \alpha.
        \end{equation}
        We have that, for all $\alpha\in[0,\alpha^*[$, $(x_\alpha, \kappa_{\alpha})$ is in $\mathcal{S}$. Then, using (\ref{eq:jacobianKappaSmooth}), we have that
        \begin{align}
            \|\mathrm{Proj}(x_{\alpha^*},&\kappa_{\alpha^*})-\mathrm{Proj}(x_a,\kappa_{a})\|   \nonumber\\
                {} &\le k_\mathcal{C}[\|x_b-x_a\|+\|\kappa_{b}-\kappa_{a}\|],
        \end{align}
        and, we also have
        \begin{align}
            \|\mathrm{Proj}(x_b,\kappa_{b})&-\mathrm{Proj}(x_{\alpha^*},\kappa_{\alpha^*})\|   \nonumber\\
                {} &= \|\kappa_{b}-\kappa_{\alpha^*}\|\le\|\kappa_{b}-\kappa_{a}\|,
        \end{align}
        therefore,
        \begin{align}
            \|\mathrm{Proj}(x_b,&\kappa_{b})-\mathrm{Proj}(x_a,\kappa_{a})\|   \nonumber\\
                {} &\le (k_\mathcal{C}+1)[\|x_b-x_a\|+\|\kappa_{b}-\kappa_{a}\|].
        \end{align}
        \item Finally, when both $(x_a,\kappa_{a})$ and $(x_b,\kappa_{b})$ are in $\mathcal{S}$ , but there are some $\alpha\in]0,1[$ for which $(x_\alpha, \kappa_{\alpha})$ is not in $\mathcal{S}$, we define $\alpha^*$ as in (\ref{eq:alphaStar}) and let
        \begin{equation}
            \beta^*=\max_{\begin{array}{c}
                 0\le \beta\le 1\\
                  (x_\beta, \kappa_{\beta})\notin \mathcal{S}
            \end{array}} \beta.
        \end{equation}
        We have that, for all $\alpha\in[0,\alpha^*[\cup]\beta^*,1]$, $(x_\alpha, \kappa_{\alpha})$ is in $\mathcal{S}$. Then, using (\ref{eq:jacobianKappaSmooth}), we have that
        \begin{align}
            \|\mathrm{Proj}(x_b,&\kappa_{b})-\mathrm{Proj}(x_{\beta^*},\kappa_{\beta^*})\|   \nonumber\\
            +\|\mathrm{Proj}(x_{\alpha^*},&\kappa_{\alpha^*})-\mathrm{Proj}(x_a,\kappa_{a})\|   \nonumber\\
                {} &\le 2k_\mathcal{C}[\|x_b-x_a\|+\|\kappa_{b}-\kappa_{a}\|],
        \end{align}
        and,
        \begin{equation}
            \mathrm{Proj}(x_{\beta^*},\kappa_{\beta^*})-\mathrm{Proj}(x_{\alpha^*},\kappa_{\alpha^*})= \kappa_{\beta^*}-\kappa_{\alpha^*}.
        \end{equation}
    \end{enumerate}
    Eventually, we can conclude that the projection (\ref{eq:projection}) is locally Lipschitz-continuous. Therefore, the smoothed control law (\ref{eq:smoothControl}) is also locally Lipschitz-continuous.
\end{proof}

\subsection{Safety and Stability Analysis}
To ensure the safety of the robot, we must guarantee that all trajectories starting at $x\left(0\right)\in\mathcal{X}_{\epsilon}$ will remain in $\mathcal{X}_{\epsilon}$ for all times. This is equivalent to proving that the set $\mathcal{X}_{\epsilon}$ is a positively invariant set for the dynamical system (\ref{eq:dynamicalSystem}). This is the result of the following theorem.
\begin{theorem}\label{theorem:safety}
    Consider the set $\mathcal{X}\subset\mathbb R^n$ that describes the free space and satisfies Assumption \ref{assumption:smoothBoundaries}. Consider the set $\mathcal{X}_\epsilon\in\mathbb R^n$ that describes the practical free space and is given by (\ref{eq:practicalfreespace}). Consider the closed-loop system (\ref{eq:closedLoopsystem}) under the locally Lipschitz-continuous control law (\ref{eq:smoothControl})-(\ref{eq:phi}). Then, the closed-loop system admits a unique solution and the set $\mathcal{X}_\epsilon$ is positively invariant.
\end{theorem}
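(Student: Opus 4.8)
The plan is to combine the local Lipschitz continuity established in the preceding Lemma with Nagumo's invariance theorem. First I would invoke the Lemma: since $\kappa$ is locally Lipschitz on $\mathcal{X}_\epsilon$, the Picard--Lindel\"of theorem guarantees, for each initial condition $x(0)\in\mathcal{X}_\epsilon$, the existence of a unique maximal solution of (\ref{eq:closedLoopsystem}). Forward completeness (existence for all $t\ge 0$) then follows once positive invariance is proved, since the trajectory is confined to the closed and bounded set $\mathcal{X}_\epsilon\subset\mathcal{W}$ and therefore cannot escape to infinity in finite time. Thus the whole statement reduces to establishing positive invariance of $\mathcal{X}_\epsilon$.

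For the invariance I would use the necessary and sufficient subtangentiality condition of Nagumo: it suffices to show that $\kappa(x)\in\mathbf{T}_{\mathcal{X}_\epsilon}(x)$ for every $x\in\partial\mathcal{X}_\epsilon$, the condition being vacuous at interior points where $\mathbf{T}_{\mathcal{X}_\epsilon}(x)=\mathbb R^n$. On the boundary $\partial\mathcal{X}_\epsilon=\{x:\mathbf{b}_{\complement\mathcal{X}}(x)=\epsilon\}$, Assumption \ref{assumption:smoothBoundaries} ensures, through the regularity of the oriented distance function, that $\nabla\mathbf{b}_{\complement\mathcal{X}}(x)$ is a well-defined unit inward normal; hence the tangent cone is the half-space $\{z:\nabla\mathbf{b}_{\complement\mathcal{X}}(x)^\top z\ge 0\}$, and the problem collapses to verifying the single scalar inequality $\nabla\mathbf{b}_{\complement\mathcal{X}}(x)^\top\kappa(x)\ge 0$ on $\partial\mathcal{X}_\epsilon$.

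The decisive observation, and the step I expect to carry the proof, is that on $\partial\mathcal{X}_\epsilon$ one has $\mathbf{b}_{\complement\mathcal{X}}(x)=\epsilon$, so that $\phi(x)=\min(1,(\epsilon'-\epsilon)/(\epsilon'-\epsilon))=1$ by (\ref{eq:phi}). I would then split according to the two branches of (\ref{eq:smoothControl}). In the branch $\kappa_0(x)^\top\nabla\mathbf{b}_{\complement\mathcal{X}}(x)\ge 0$ the control equals $\kappa_0(x)$ and the inequality holds by definition of that branch. In the branch $\kappa_0(x)^\top\nabla\mathbf{b}_{\complement\mathcal{X}}(x)\le 0$ the control equals $\hat{\Pi}(x)\kappa_0(x)$, and using $\|\nabla\mathbf{b}_{\complement\mathcal{X}}(x)\|=1$ I would compute $\nabla\mathbf{b}_{\complement\mathcal{X}}(x)^\top\hat{\Pi}(x)\kappa_0(x)=(1-\phi(x))\,\nabla\mathbf{b}_{\complement\mathcal{X}}(x)^\top\kappa_0(x)=0$ since $\phi(x)=1$. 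In both cases the field lies in the tangent cone, and Nagumo's theorem yields positive invariance; combined with local uniqueness and forward completeness this closes the argument.

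The main obstacle is conceptual rather than technical, and it resides precisely in this last computation: it is where one must show that the smoothing does not compromise safety. The modification $\hat{\Pi}$ acts throughout the collar $\epsilon\le\mathbf{b}_{\complement\mathcal{X}}(x)\le\epsilon'$, but $\phi$ is engineered so that $\phi=1$ \emph{exactly} at the safety boundary $\mathbf{b}_{\complement\mathcal{X}}=\epsilon$, which makes $\hat{\Pi}$ collapse to the true orthogonal projection onto the tangent hyperplane and thereby annihilates any outward component of $\kappa_0$. A secondary point demanding care is the well-definedness and unit norm of $\nabla\mathbf{b}_{\complement\mathcal{X}}$ on $\partial\mathcal{X}_\epsilon$, which I would justify by the $\mathcal{C}^{2,l}$ regularity of $\mathbf{b}_{\complement\mathcal{X}}$ near $\partial\mathcal{X}$ guaranteed by Assumption \ref{assumption:smoothBoundaries}.
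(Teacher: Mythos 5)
Your proposal is correct and follows essentially the same route as the paper: both arguments rest on the observation that $\phi(x)=1$ on $\partial\mathcal{X}_\epsilon$, so that $\hat{\Pi}(x)$ reduces to the exact orthogonal projection and $\nabla\mathbf{b}_{\complement\mathcal{X}}(x)^\top\kappa(x)=0$ there, with uniqueness of solutions obtained from the local Lipschitz continuity established in the preceding lemma. Your version is slightly more explicit (naming Nagumo's condition, checking the branch where $\kappa_0(x)^\top\nabla\mathbf{b}_{\complement\mathcal{X}}(x)\ge 0$, and noting forward completeness), but the substance is the same.
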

\begin{proof}
    To prove the forward invariance of the set $\mathcal{X}_{\epsilon}$, we can verify that when $x\in\partial\mathcal{X}_{\epsilon}$, or equivalently, when $\mathbf{b}_{\complement\mathcal{X}_{\epsilon}}\left(x\right)$, and $\kappa_0\left(x\right)^\top\nabla\mathbf{b}_{\complement	\mathcal{X}}\left(x\right)\le 0$ we have

\[\setlength{\arraycolsep}{10pt}
  \begin{array}{ l l l}
    \left. \dot{x} \right|_{\mathbf{b}_{\complement	\mathcal{X}}\left(x\right)\ = \epsilon}&=&\left. \kappa(x) \right|_{\mathbf{b}_{\complement	\mathcal{X}}\left(x\right)\ = \epsilon}\\
    &=&\left. \hat{\Pi}\left(x\right)\kappa_0\left(x\right) \right|_{\mathbf{b}_{\complement	\mathcal{X}}\left(x\right)\ = \epsilon}\\
    &=&\Pi\left(x\right)\kappa_0\left(x\right)\\
    &=&\left(\mathbf{I}_n-\nabla\mathbf{b}_{\complement\mathcal{X}}\nabla\mathbf{b}_{\complement\mathcal{X}}^\top\right)\kappa_0\left(x\right)\\
    \end{array} 
\]
We multiply both sides by $\nabla\mathbf{b}_{\complement\mathcal{X}}^\top$ and we find that
\begin{equation}
       \nabla\mathbf{b}_{\complement\mathcal{X}}^\top\dot{x}=0\hspace{2mm}\text{when}\hspace{2mm} x\in\partial\mathcal{X}_{\epsilon},\hspace{2mm} \text{and}\hspace{2mm} \kappa_0\left(x\right)^\top\nabla\mathbf{b}_{\complement	\mathcal{X}}\left(x\right)\le 0.
\end{equation}

Thus, the trajectories will stay inside or at the boundary of the practical free space $\mathcal{X}_{\epsilon}$. Eventually, the set $\mathcal{X}_{\epsilon}$ is positively invariant.
Also, it follows from \cite[theorem 3.3]{khalil2002nonlinear} that the closed-loop system admits a unique solution.
\end{proof}
 
Theorem \ref{theorem:safety} states that safe navigation inside the practical free space $\mathcal{X}_\epsilon$ is guaranteed regardless of the chosen nominal controller $\kappa_0(x)$. Moreover, safety is ensured for any shape of the obstacles (convex or non-convex). The only mild requirement on the free space $\mathcal{X}$ is given in Assumption \ref{assumption:smoothBoundaries}.

Next, we consider the motion-to-goal feature, \textit{i.e.,} convergence of the robot's trajectories to the desired position $x_d$. The choice of the nominal controller $\kappa_0(x)$ might affect the convergence to the goal. The following result is based on choosing the traditional nominal controller
\begin{equation}\label{eq:nominalControl}
    \kappa_0(x)=-k(x-x_d), \hspace{5mm} k>0.
\end{equation}
\begin{theorem}\label{theorem:convergence}
    Consider the set $\mathcal{X}\subset\mathbb R^n$ that describes the free space and satisfies Assumption \ref{assumption:smoothBoundaries}. Consider the set $\mathcal{X}_\epsilon\in\mathbb R^n$ that describes the practical free space and is given by (\ref{eq:practicalfreespace}). Consider the closed-loop system (\ref{eq:closedLoopsystem}) under the locally Lipschitz-continuous control law (\ref{eq:smoothControl}), with $\kappa_0(.)$ as in (\ref{eq:nominalControl}). Then, the distance $||x-x_d||$ is non-increasing, the equilibrium point $x=x_d$ is exponentially stable, and trajectories converge to the set $\mathcal{E}\cup\{x_d\}$, where
    \begin{equation}\label{eq:E}
    \mathcal{E}:=\{x:\mathbf{b}_{\complement	\mathcal{X}}(x)=\epsilon,\left(x-x_d\right)=\lambda\nabla\mathbf{b}_{\complement	\mathcal{X}}\left(x\right), \lambda\in\mathbb{R}_{>0}\}
\end{equation}
is a set of measure zero.
\end{theorem}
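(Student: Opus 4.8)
The plan is to analyse the Lyapunov function $V(x)=\frac12\|x-x_d\|^2$ along the closed-loop trajectories, treating separately the two branches of (\ref{eq:smoothControl}). On the nominal branch $\kappa(x)=\kappa_0(x)=-k(x-x_d)$, giving $\dot V=-k\|x-x_d\|^2$. On the projection branch, writing $g:=\nabla\mathbf{b}_{\complement\mathcal{X}}(x)$ (a unit vector) and substituting $\kappa_0(x)=-k(x-x_d)$ into $\hat\Pi(x)\kappa_0(x)$, I would obtain
\[
\dot V=-k\|x-x_d\|^2+k\,\phi(x)\big[(x-x_d)^\top g\big]^2 .
\]
Because $|(x-x_d)^\top g|\le\|x-x_d\|$ (Cauchy--Schwarz, $\|g\|=1$) and $\phi(x)\le 1$, both branches yield $\dot V\le 0$, so $\|x-x_d\|$ is non-increasing; the two expressions agree on the switching surface, consistent with Lemma~1.

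For exponential stability I would use that $x_d\in\mathbf{int}(\mathcal{X}_\epsilon)$ forces $\mathbf{b}_{\complement\mathcal{X}}(x_d)>\epsilon$, hence $\phi(x_d)<1$ strictly; by continuity there is a neighbourhood of $x_d$ on which $\phi(x)\le\phi_{\max}<1$. There, both branches give $\dot V\le-k(1-\phi_{\max})\|x-x_d\|^2=-2k(1-\phi_{\max})V$, so that $\|x(t)-x_d\|\le e^{-k(1-\phi_{\max})t}\|x(0)-x_d\|$, establishing local exponential stability of $x_d$.

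For the convergence claim I would invoke LaSalle's invariance principle on the sublevel set $\Omega:=\{x\in\mathcal{X}_\epsilon:\|x-x_d\|\le\|x(0)-x_d\|\}$, which is compact (a closed subset of a ball) and positively invariant (by Theorem~\ref{theorem:safety} and monotonicity of $V$). The crux is characterizing $\{\dot V=0\}$: on the nominal branch it forces $x=x_d$, while on the projection branch $\dot V=0$ with $x\neq x_d$ requires simultaneously $\phi(x)=1$ and the Cauchy--Schwarz equality $x-x_d=\lambda g$. Since $\phi(x)=1$ holds only for $\mathbf{b}_{\complement\mathcal{X}}(x)=\epsilon$ and the branch condition $\kappa_0(x)^\top g\le 0$ forces $\lambda\ge 0$, this is exactly $\mathcal{E}$. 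As both $x_d$ and every point of $\mathcal{E}$ are equilibria ($\hat\Pi(x)\kappa_0(x)=0$ there), the largest invariant set in $\{\dot V=0\}$ equals $\mathcal{E}\cup\{x_d\}$, to which trajectories converge.

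Finally, $\mathcal{E}\subseteq\{x:\mathbf{b}_{\complement\mathcal{X}}(x)=\epsilon\}=\partial\mathcal{X}_\epsilon$, which by the $\mathcal{C}^{1,1}$-regularity of $\complement\mathcal{X}_\epsilon$ cited above is an $(n-1)$-dimensional submanifold of $\mathbb R^n$ and hence has zero Lebesgue measure, so $\mathcal{E}$ is a set of measure zero. I expect the main obstacle to be the exact identification of $\{\dot V=0\}$ on the projection branch---tracking the joint equality cases of Cauchy--Schwarz and $\phi=1$ together with the sign constraint $\lambda>0$ so that it coincides precisely with $\mathcal{E}$---and verifying the compact positive invariance required for LaSalle; the exponential-stability and measure-zero arguments are comparatively routine.
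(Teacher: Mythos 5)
Your proposal is correct and follows essentially the same route as the paper: the Lyapunov function $V(x)=\frac{1}{2}\|x-x_d\|^2$, the two-branch computation showing $\dot V=-k\|x-x_d\|^2\bigl[1-\phi(x)\cos^2\theta(x)\bigr]\le 0$, the identification of $\{\dot V=0\}$ with $\{x_d\}\cup\mathcal{E}$ via $\phi=1$ and the Cauchy--Schwarz equality case, and the measure-zero conclusion from $\mathcal{E}\subset\partial\mathcal{X}_\epsilon$. The only (harmless, arguably sharper) deviation is your local exponential-stability argument via $\phi\le\phi_{\max}<1$ near $x_d$, whereas the paper simply takes a ball around $x_d$ on which the dynamics reduce to the pure nominal law $\dot x=-k(x-x_d)$; your version also covers the case where $x_d$ lies inside the $\epsilon'$-band.
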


\begin{proof}
    We consider the following positive definitive function
\begin{equation}
    V\left(x\right)=\frac{1}{2}||x-x_d||^2.
\end{equation}
Its time derivative along the trajectories of the closed-loop system \eqref{eq:closedLoopsystem} is given by
\begin{align}
&\dot{V}\left(x\right) = \left(x-x_d\right)^\top\kappa(x)\nonumber\\
&=\setlength{\arraycolsep}{2pt}
    \left\{ \begin{array}{ l l }
    -k||x-x_d||^2, &\mathbf{b}_{\complement	\mathcal{X}}\left(x\right)>\epsilon'\;\text{or} \\
    {}&  \kappa_0\left(x\right)^\top\nabla\mathbf{b}_{\complement	\mathcal{X}}\left(x\right)\ge 0,\\
    -k\left(x-x_d\right)^\top\hat{\Pi}\left(x\right)\left(x-x_d\right), &\mathbf{b}_{\complement	\mathcal{X}}\left(x\right)\le\epsilon' \;\text{and} \\
    {}&  \kappa_0\left(x\right)^\top\nabla\mathbf{b}_{\complement	\mathcal{X}}\left(x\right)\le 0.\\
  \end{array} \right. 
\end{align}
Let us prove that $\hat{\Pi}\left(x\right)$ is a positive semi-definite matrix. We have that 
\begin{align*}
    &\left(x-x_d\right)^\top\hat{\Pi}\left(x\right)\left(x-x_d\right) \\ &=\left(x-x_d\right)^\top\left(\mathbf{I}_n-\phi\left(x\right)\nabla\mathbf{b}_{\complement	\mathcal{X}}\left(x\right)\nabla\mathbf{b}_{\complement	\mathcal{X}}\left(x\right)^\top\right)\left(x-x_d\right)	\\
    &=||x-x_d||^2-\phi\left(x\right)||x-x_d||^2||\nabla\mathbf{b}_{\complement	\mathcal{X}}\left(x\right)||^2\cos^2{\theta}(x)\\
    &=||x-x_d||^2\left[1-\phi\left(x\right)\cos^2{\theta}(x)\right]\ge 0,
\end{align*}
\noindent since $0\le\phi\left(x\right)\le 1$, where $\theta(x)$ is the angle  between the two vectors $\left(x-x_d\right)$ and $\nabla\mathbf{b}_{\complement	\mathcal{X}}\left(x\right)$. Finally, we have that
\begin{equation}
    \dot{V}\left(x\right)\le 0,\hspace{5mm} \forall x\in\mathcal{X}_{\epsilon}.
\end{equation}
The points for which $\dot{V}\left(x\right)=0$ are either $x=x_d$ or are the points $x$ that satisfy $\mathbf{b}_{\complement	\mathcal{X}}\left(x\right)\le\epsilon'$ and $\kappa_0\left(x\right)^\top\nabla\mathbf{b}_{\complement	\mathcal{X}}\left(x\right)\le 0$ and $\left[1-\phi\left(x\right)\cos^2{\theta}(x)\right]=0$. The later implies that $\phi\left(x\right)=1$ and $\cos^2{\theta}(x)=1$. In other terms, the points $x$ lies on the boundary $\partial\mathcal{X}_{\epsilon}$ and satisfy $\left(x-x_d\right)=\lambda\nabla\mathbf{b}_{\complement	\mathcal{X}}\left(x\right)$ with $\lambda>0$. Finally, the solutions converge to the set of points $\{x_d\}\cup\mathcal{E}$, where $\mathcal{E}$ is defined in \eqref{eq:E}.
Since $\partial\mathcal{X}_\epsilon$ has measure zero, and $\mathcal{E}$ is a subset of $\partial\mathcal{X}_\epsilon$, it follows that $\mathcal{E}$ has also measure zero.

We can study the local behavior of the robot's trajectory in the neighborhood of the desired equilibrium. Since $x_d\in\mathbf{int}\left(\mathcal{X}_{\epsilon}\right)$, there exist a set represented by the ball $\mathcal{B}\left(x_d,r\right)$, where $r>0$, such that $\mathcal{B}\left(x_d,r\right)\subset\mathbf{int}\left(\mathcal{X}_{\epsilon}\right)$. The local dynamics of the robot on this set is given by $\dot{x}=-k(x-x_d)$, this implies that the desired equilibrium $x=x_d$ is exponentially stable.
\end{proof}

In view of Theorem \ref{theorem:convergence}, we can state that the solutions of the closed-loop system converge to either the desired equilibrium $x_d$ or to the set of measure zero $\mathcal{E}$. An undesired equilibrium $x\in\mathcal{E}$, by construction, must satisfy $(x-x_d)=\lambda\nabla\mathbf{b}_{\complement	\mathcal{X}}\left(x\right), \lambda>0$. In other terms, the points $x$, $\mathbf{P}_{\partial\mathcal{X}}(x)$ and $x_d$ are all collinear. This is only possible when the segment, whose endpoints are the desired equilibrium $x_d$ and the undesired equilibrium $x$, intersects the obstacle set. The invariance properties of the equilibria $\mathcal{E}$ depend on the topology of the free space.

\section{Convex Sphere Worlds}\label{section:ConvexSphereWorlds}
\subsection{Topology of the Obstacle Set}
\begin{figure}[t]
    \centering
    \includegraphics[width=0.8\columnwidth]{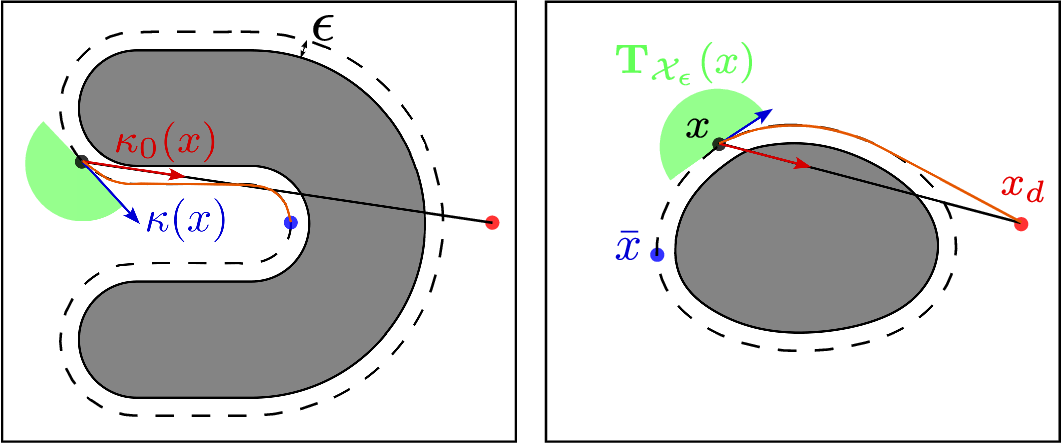}
    \caption{Different obstacle's topology affecting the nature of the equilibrium point:(left) a non-convex obstacle for which the trajectory of the robot converges to the undesired equilibrium, and (right) a convex obstacle for which the trajectory converges to the desired goal $x_d$.} 
    \label{fig:ConvexNonConvex}
\end{figure}

The nature of the undesired equilibria defined by the set $\mathcal{E}$ is directly related to the topology of the obstacles. For instance, for non-convex obstacles, and for a given goal $x_d$ for which the undesired equilibrium is located in the concave part, the trajectory of the robot may converge to undesired equilibrium. Therefore, the convexity of the obstacles is required. Also, besides the convexity, the flatness of the obstacles can affect the nature of the equilibria. When the undesired equilibrium point is located in a strongly curved part of the obstacle, it becomes unstable, and the more flat is the obstacle, the more stable is the undesired equilibrium point. Therefore, we consider the following assumption on the curvarute of the obstacles \cite[Assumption 2]{arslan2019sensor}

\begin{figure}[t]
    \centering
    \includegraphics[width=0.7\columnwidth]{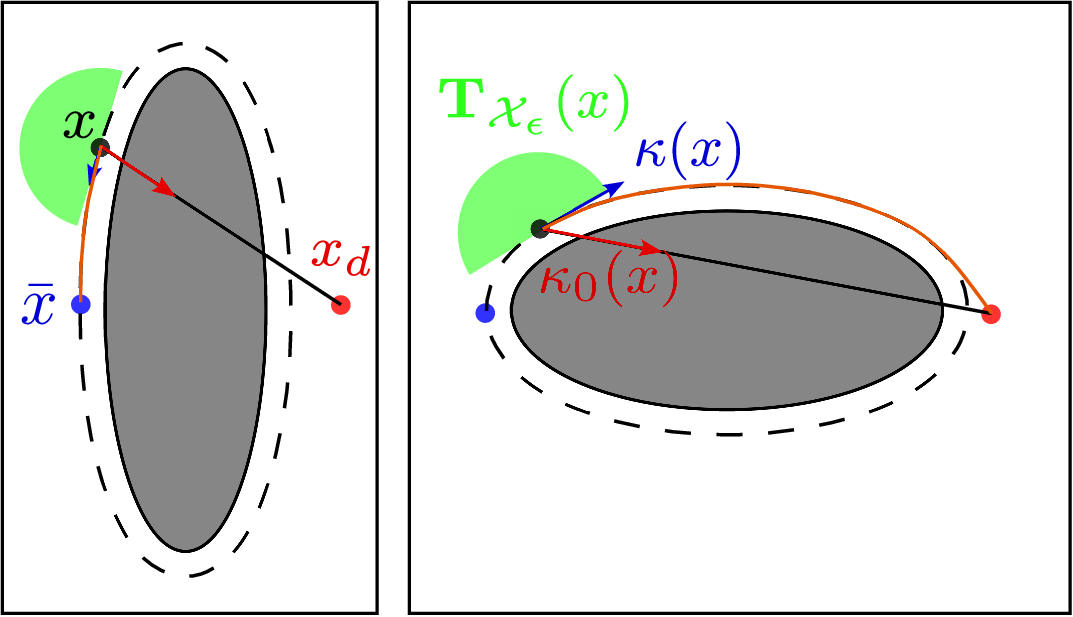}
    \caption{Two convex obstacles where the curvature affects the nature of the equilibrium point:(left) a flat obstacle, as viewed from the position of the vehicle, for which its trajectory converges to the undesired equilibrium, and (right) a strongly convex obstacle, as viewed from the position of the vehicle, for which its trajectory converges to the desired goal $x_d$.}
     
    \label{fig:strongConvexity}
\end{figure}
\begin{assumption}\label{assumption:curvatureCondition}
The Jacobian matrix $\mathbf{J}_{\mathbf{P}_{\partial\mathcal{X}}}\left(x\right)$ of the metric projection of any stationary
point $x\in\mathcal{E}$  onto the boundary $\partial\mathcal{X}$ of the free-space satisfy
\begin{equation}
    \begin{array}{l l}
        \mathbf{J}_{\mathbf{P}_{\partial\mathcal{X}}}\left(x\right)\prec\frac{||x_d-\mathbf{P}_{\partial\mathcal{X}}\left(x\right)||}{\epsilon+||x_d-\mathbf{P}_{\partial\mathcal{X}}\left(x\right)||}\mathbf{I}_n, & \forall x\in\mathcal{X},\\
    \end{array}
\end{equation}
\end{assumption}

For Assumption \ref{assumption:curvatureCondition} to hold, the practical obstacle which is defined by dilating the obstacle by $\epsilon$ must be contained entirely in the ball $\mathcal{B}(x_d,||\bar x-x_d||)$, where $\bar x\in\mathcal{E}$. Figures \ref{fig:ConvexNonConvex}-\ref{fig:strongConvexity} depicts the different obstacle topologies as discussed above.

\begin{lemma}\label{lemma:symmetricJacobian}
   Consider the set $\mathcal{X}\subset\mathbb R^n$ that describes the free space and satisfies Assumption \ref{assumption:smoothBoundaries}. Consider the set $\mathcal{X}_\epsilon\in\mathbb R^n$ that describes the practical free space and is given by (\ref{eq:practicalfreespace}). Therefore, the Jacobian $\mathbf{J}_{\mathbf{P}_{\mathcal{O}_i}}(x)$ is a symmetric matrix, for all $x\in\mathcal{X}$.
\end{lemma}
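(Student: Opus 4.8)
The plan is to obtain a closed-form expression for the metric projection $\mathbf{P}_{\mathcal{O}_i}(x)$ in terms of the oriented distance function and its gradient, then differentiate this expression and observe that every resulting term is manifestly symmetric.

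First I would note that for any $x\in\mathbf{int}(\mathcal{X})$ lying in the tubular neighborhood of positive reach where the projection onto $\partial\mathcal{X}$ is unique (guaranteed by Assumption \ref{assumption:positive reach} together with $\mathbf{b}_{\complement\mathcal{X}}(x)<h$), the point projects onto the nearest obstacle boundary, so $\mathbf{P}_{\mathcal{O}_i}(x)=\mathbf{P}_{\partial\mathcal{X}}(x)$. Using the gradient identity for the oriented distance function from the Notation section, namely $x-\mathbf{P}_{\partial\mathcal{X}}(x)=\mathbf{b}_{\complement\mathcal{X}}(x)\,\nabla\mathbf{b}_{\complement\mathcal{X}}(x)$ for $x\in\mathbf{int}(\mathcal{X})$, I would rewrite the projection as
\begin{equation}
    \mathbf{P}_{\mathcal{O}_i}(x)=x-\mathbf{b}_{\complement\mathcal{X}}(x)\,\nabla\mathbf{b}_{\complement\mathcal{X}}(x).
\end{equation}

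Next I would differentiate with respect to $x$, applying the product rule to the second term (a scalar times a vector). This yields
\begin{equation}
    \mathbf{J}_{\mathbf{P}_{\mathcal{O}_i}}(x)=\mathbf{I}_n-\nabla\mathbf{b}_{\complement\mathcal{X}}(x)\nabla\mathbf{b}_{\complement\mathcal{X}}(x)^\top-\mathbf{b}_{\complement\mathcal{X}}(x)\,\mathbf{H}(x),
\end{equation}
where $\mathbf{H}(x)$ denotes the Hessian of $\mathbf{b}_{\complement\mathcal{X}}$ at $x$. The first term is symmetric; the second is an outer product $vv^\top$ and hence symmetric; and the third is a scalar multiple of $\mathbf{H}(x)$. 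Under Assumption \ref{assumption:smoothBoundaries}, $\mathcal{X}$ is of class $\mathcal{C}^{2,l}$, so by the same Delfour--Zol\'esio result (Theorem 8.2, Chap.~7) invoked in the proof of Lemma 1, $\mathbf{b}_{\complement\mathcal{X}}$ is twice continuously differentiable in a neighborhood of $\partial\mathcal{X}$, and its Hessian $\mathbf{H}(x)$ is therefore symmetric by Schwarz's theorem on the equality of mixed partials.

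To conclude, since $\mathbf{J}_{\mathbf{P}_{\mathcal{O}_i}}(x)$ is a sum of three symmetric matrices, it is itself symmetric wherever the projection is differentiable. I expect the main obstacle to be purely a regularity issue rather than an algebraic one: one must verify that the Hessian of the distance function actually exists (so that $\mathbf{J}_{\mathbf{P}_{\mathcal{O}_i}}$ is well defined in the first place), which is exactly where the $\mathcal{C}^{2,l}$ hypothesis is essential --- for merely $\mathcal{C}^{1,1}$ boundaries the distance function need not be twice differentiable, and the statement would then require interpreting the Jacobian in a generalized sense.
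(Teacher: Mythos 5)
Your argument is sound, but it is genuinely different from what the paper does: the paper's ``proof'' of this lemma is a one-line citation to \cite[Lemma 7]{arslan2019sensor}, whose underlying argument is that the metric projection onto a convex set is the gradient of the convex potential $\tfrac{1}{2}\|x\|^2-\tfrac{1}{2}\mathbf{d}_{\mathcal{O}_i}(x)^2$ (Moreau's decomposition), so its Jacobian is a Hessian and hence symmetric. Your route --- writing the projection explicitly as $x-\mathbf{b}_{\complement\mathcal{X}}(x)\nabla\mathbf{b}_{\complement\mathcal{X}}(x)$ and differentiating --- is really the same fact made concrete: your formula exhibits $\mathbf{P}(x)$ as $\nabla\bigl(\tfrac{1}{2}\|x\|^2-\tfrac{1}{2}\mathbf{b}_{\complement\mathcal{X}}(x)^2\bigr)$, and your three-term decomposition $\mathbf{I}_n-\nabla\mathbf{b}\,\nabla\mathbf{b}^\top-\mathbf{b}\,\mathbf{H}$ is exactly that Hessian. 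What your version buys is self-containedness and an explicit statement of where the regularity hypothesis enters (the $\mathcal{C}^{2,l}$ assumption is needed so that $\mathbf{H}$ exists and Schwarz's theorem applies); what the citation buys is validity for arbitrary convex obstacles without smoothness, since Moreau's argument needs only convexity plus differentiability of $\mathbf{d}^2$.

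One mismatch to fix: the lemma asserts symmetry of $\mathbf{J}_{\mathbf{P}_{\mathcal{O}_i}}(x)$ for all $x\in\mathcal{X}$, whereas you prove symmetry of $\mathbf{J}_{\mathbf{P}_{\partial\mathcal{X}}}(x)$ and only inside the tubular neighborhood where $\mathcal{O}_i$ happens to be the nearest obstacle; for a point $x$ closer to some other obstacle $\mathcal{O}_j$, your identification $\mathbf{P}_{\mathcal{O}_i}(x)=\mathbf{P}_{\partial\mathcal{X}}(x)$ fails. The repair is immediate: run the identical computation with $\mathbf{d}_{\mathcal{O}_i}$ in place of $\mathbf{b}_{\complement\mathcal{X}}$, using $\mathbf{P}_{\mathcal{O}_i}(x)=x-\mathbf{d}_{\mathcal{O}_i}(x)\nabla\mathbf{d}_{\mathcal{O}_i}(x)$, which holds wherever the projection onto $\overline{\mathcal{O}_i}$ is unique (everywhere in $\mathcal{X}$ when $\mathcal{O}_i$ is convex). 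This is also the form the paper actually uses later, since the proof of Theorem \ref{theorem:unstableEquilibria} only needs symmetry at points of $\mathcal{E}$, which do lie in the tubular neighborhood --- so your restriction is harmless for the downstream argument, but the lemma as stated requires the per-obstacle version. Your closing caveat about $\mathcal{C}^{1,1}$ boundaries is well taken and is precisely why Assumption \ref{assumption:smoothBoundaries} asks for $\mathcal{C}^{2,l}$ rather than the $\mathcal{C}^{1,1}$ regularity that the dilation argument alone would provide.
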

\begin{proof}
    See \cite[Lemma 7]{arslan2019sensor}.
\end{proof}

We summarize the nature of the undesired equilibria and the desired goal in the following theorem: 
\begin{theorem}\label{theorem:unstableEquilibria}
    Consider the set $\mathcal{X}\subset\mathbb R^n$ that describes the free space and satisfies Assumption \ref{assumption:smoothBoundaries}. Consider the set $\mathcal{X}_\epsilon\in\mathbb R^n$ that describes the practical free space and is given by (\ref{eq:practicalfreespace}). Consider the closed-loop system (\ref{eq:closedLoopsystem}) under the locally Lipschitz-continuous control law (\ref{eq:smoothControl}), with $\kappa_0(.)$ as in (\ref{eq:nominalControl}). If Assumption \ref{assumption:curvatureCondition} holds, then 
    \begin{enumerate}
        \item all the undesired equilibria $\bar x\in\mathcal{E}$ are unstable, and
        \item the desired equilibrium $x_d$ is locally exponentially stable and almost globally asymptotically stable.
    \end{enumerate}
\end{theorem}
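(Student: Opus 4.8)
The plan is to build on Theorem~\ref{theorem:convergence}, which already supplies local exponential stability of $x_d$ and convergence of every trajectory to $\mathcal{E}\cup\{x_d\}$. It therefore remains to (i) show that each $\bar x\in\mathcal{E}$ is unstable, and (ii) upgrade ``convergence to $\mathcal{E}\cup\{x_d\}$'' into almost global asymptotic stability of $x_d$ via a stable-manifold/measure-zero argument. The heart of the proof is step (i), a linearization of the closed-loop vector field at a boundary equilibrium. Near any $\bar x\in\mathcal{E}$ we have $\mathbf{b}_{\complement\mathcal{X}}(\bar x)=\epsilon$ and $\kappa_0(\bar x)^\top\nabla\mathbf{b}_{\complement\mathcal{X}}(\bar x)=-k\lambda<0$, so the closed loop is locally governed by the projected branch $F(x):=-k\hat{\Pi}(x)(x-x_d)$, which under Assumption~\ref{assumption:smoothBoundaries} is $C^1$ on $\{\mathbf{b}_{\complement\mathcal{X}}\ge\epsilon\}$; the one-sided branch of $\phi$ (with $\phi'=-1/(\epsilon'-\epsilon)$) is the relevant one because, by Theorem~\ref{theorem:safety}, trajectories remain in $\mathcal{X}_\epsilon$.

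Writing $n(x):=\nabla\mathbf{b}_{\complement\mathcal{X}}(x)$ and exploiting $\mathbf{P}_{\partial\mathcal{X}}(x)=x-\mathbf{b}_{\complement\mathcal{X}}(x)\,n(x)$, I would relate the Hessian $\nabla^2\mathbf{b}_{\complement\mathcal{X}}$ to the (symmetric, by Lemma~\ref{lemma:symmetricJacobian}) projection Jacobian through $\mathbf{J}_{\mathbf{P}_{\partial\mathcal{X}}}=\mathbf{I}_n-n n^\top-\epsilon\,\nabla^2\mathbf{b}_{\complement\mathcal{X}}$, together with the identities $\nabla^2\mathbf{b}_{\complement\mathcal{X}}\,n=0$ and $\mathbf{J}_{\mathbf{P}_{\partial\mathcal{X}}}n=0$. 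Decomposing $\mathbb{R}^n$ into the normal line $\mathrm{span}(n)$ and the tangent space $T=\{w:n^\top w=0\}$ of $\partial\mathcal{X}_\epsilon$ at $\bar x$, a direct computation then gives that $n$ is an eigenvector of $\mathbf{J}_F(\bar x)$ with eigenvalue $k\lambda\phi'<0$, while on $T$ one obtains $\mathbf{J}_F(\bar x)|_T=\tfrac{k}{\epsilon}\big[\lambda(\mathbf{I}_T-\mathbf{J}_{\mathbf{P}_{\partial\mathcal{X}}}|_T)-\epsilon\,\mathbf{I}_T\big]$, where $\mathbf{I}_T$ denotes the identity on $T$.

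The decisive step is purely algebraic. Setting $d:=\|x_d-\mathbf{P}_{\partial\mathcal{X}}(\bar x)\|$, the collinearity defining $\mathcal{E}$ forces $\lambda=\|\bar x-x_d\|=d+\epsilon$, so the tangential eigenvalue associated with an eigenvalue $\mu$ of $\mathbf{J}_{\mathbf{P}_{\partial\mathcal{X}}}|_T$ equals $\tfrac{k}{\epsilon}\big[(d+\epsilon)(1-\mu)-\epsilon\big]$, which is strictly positive precisely when $\mu<\tfrac{d}{d+\epsilon}$. This is exactly the content of Assumption~\ref{assumption:curvatureCondition}, which guarantees $\mathbf{J}_{\mathbf{P}_{\partial\mathcal{X}}}(\bar x)\prec\tfrac{d}{d+\epsilon}\mathbf{I}_n$ and hence forces every tangential eigenvalue positive. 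Thus $\mathbf{J}_F(\bar x)$ is hyperbolic with exactly one negative (normal) eigenvalue and $n-1$ positive (tangential) eigenvalues, so $\bar x$ is unstable, establishing claim~(1). For claim~(2) I would combine this hyperbolicity with Theorem~\ref{theorem:convergence}: extending $F$ smoothly across $\{\mathbf{b}_{\complement\mathcal{X}}=\epsilon\}$ via its projected-branch formula, the stable manifold $W^s(\bar x)$ is one-dimensional (tangent to $n$) and hence of Lebesgue measure zero; since $\mathcal{E}$ is finite (at most one undesired equilibrium behind each of the $M$ separated obstacles), $\bigcup_{\bar x\in\mathcal{E}}W^s(\bar x)$ has measure zero. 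A trajectory can approach a point of $\mathcal{E}$ only along its stable manifold, so from every initial condition outside this measure-zero set it converges to $x_d$; with the local exponential stability from Theorem~\ref{theorem:convergence}, this yields almost global asymptotic stability.

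I expect the main obstacle to be the careful treatment of the boundary equilibrium on the state-constrained flow: justifying that the one-sided interior linearization is the correct object on the invariant set $\mathcal{X}_\epsilon$, and verifying the Hessian--projection identity $\mathbf{J}_{\mathbf{P}_{\partial\mathcal{X}}}=\mathbf{I}_n-nn^\top-\epsilon\,\nabla^2\mathbf{b}_{\complement\mathcal{X}}$ with $\nabla^2\mathbf{b}_{\complement\mathcal{X}}\,n=0$ under the $\mathcal{C}^{2,l}$ regularity of Assumption~\ref{assumption:smoothBoundaries}. Once these are in place, the eigenvalue computation and its exact match with Assumption~\ref{assumption:curvatureCondition} are routine.
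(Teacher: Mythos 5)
Your proposal is correct and follows the same overall strategy as the paper's proof: linearize the projected branch of \eqref{eq:smoothControl} at an undesired equilibrium $\bar x\in\mathcal{E}$, reduce the sign of the eigenvalues to the curvature condition of Assumption~\ref{assumption:curvatureCondition} via the projection Jacobian $\mathbf{J}_{\mathbf{P}_{\partial\mathcal{X}}}$ (using its symmetry from Lemma~\ref{lemma:symmetricJacobian} and the identity $\mathbf{J}_{\mathbf{P}_{\partial\mathcal{X}}}(x)(x-\mathbf{P}_{\partial\mathcal{X}}(x))=0$), then invoke the stable manifold theorem to get a measure-zero basin for $\mathcal{E}$. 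The genuine difference is the treatment of the normal direction. The paper restricts the dynamics to $\partial\mathcal{X}_\epsilon$, rewrites them as \eqref{eq:localController}, and differentiates that formula; since this extension freezes $\phi\equiv 1$, the resulting Jacobian \eqref{eq:JacobianController} is positive definite (normal eigenvalue $2kd/\epsilon>0$, with $d=\|x_d-\mathbf{P}_{\partial\mathcal{X}}(\bar x)\|$). You instead differentiate the actual closed-loop field on $\mathcal{X}_\epsilon$, keeping the one-sided term $\nabla\phi=-\nabla\mathbf{b}_{\complement\mathcal{X}}/(\epsilon'-\epsilon)$, and obtain a genuine saddle: one negative normal eigenvalue $-k\lambda/(\epsilon'-\epsilon)$ and, on the tangent space, exactly the same operator $\tfrac{k}{\epsilon}\bigl[d\,\mathbf{I}-(d+\epsilon)\mathbf{J}_{\mathbf{P}_{\partial\mathcal{X}}}\bigr]$ as the paper, whose positivity is precisely Assumption~\ref{assumption:curvatureCondition}. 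Your version is the more faithful linearization of the constrained flow and yields the sharper conclusion that $W^s(\bar x)$ is one-dimensional (tangent to the normal) rather than merely of positive codimension; both suffice for the measure-zero claim. Two caveats apply to both arguments: the vector field is only one-sidedly differentiable at $\bar x$ (the $\min$ in $\phi$ kinks exactly at $\mathbf{b}_{\complement\mathcal{X}}=\epsilon$), so the stable manifold theorem must be applied to a smooth extension agreeing with the closed loop on the invariant set $\mathcal{X}_\epsilon$ --- which you flag explicitly --- and the measure-zero conclusion for $\bigcup_{\bar x\in\mathcal{E}}W^s(\bar x)$ requires $\mathcal{E}$ to be at most countable; you assert finiteness (one equilibrium per obstacle) without proof, while the paper leaves this point implicit.
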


\begin{proof}
To prove {\it item 1)}, first, we consider the ball $\mathcal{B}(\bar x,r)$, where $\bar x\in\mathcal{E}$. We define the set $\mathcal{P}=\{x:\left(x-x_d\right)^\top\nabla\mathbf{b}_{\complement\mathcal{X}}\left(x\right)\ge 0\}$ . 
    The local dynamics when the configurations of the robot are restricted to the set $\mathcal{B}\cap\mathcal{P}\cap\partial\mathcal{X}_{\epsilon}$ are given by

\begin{equation}\label{eq:localController}
    u=-k\left[x-x_d+g(x)(x-\mathbf{P}_{\partial\mathcal{X}}(x))\right],
\end{equation}
where
\begin{equation}
    g\left(x\right)=\frac{\left(x_d-\mathbf{P}_{\partial\mathcal{X}}(x)\right)^\top\left(x-\mathbf{P}_{\partial\mathcal{X}}(x)\right)}{\epsilon^2}-1.
\end{equation}

The Jacobian of the controller $u$ is given by
\begin{equation}
    \textbf{J}_u(x)=-k[\textbf{I}_n+(x-\mathbf{P}_{\partial\mathcal{X}}(x)\textbf{J}_g(x)+g(x)(\textbf{I}_n-\textbf{J}_{\mathbf{P}_{\partial\mathcal{X}}}(x))],
\end{equation}
where the Jacobian of $g(x)$ is
\begin{align}
    \textbf{J}_g(x) &=
         \frac{(x-\mathbf{P}_{\partial\mathcal{X}}(x))^\top(-\textbf{J}_{\mathbf{P}_{\partial\mathcal{X}}}(x))}{\epsilon^2} \nonumber\\
       &+\frac{(x_d-\mathbf{P}_{\partial\mathcal{X}}(x))^\top(\textbf{I}_n-\textbf{J}_{\mathbf{P}_{\partial\mathcal{X}}}(x))}{\epsilon^2}.
\end{align}
We have from \cite[Proposition 3.7]{Fitzpatrick1982} that $\textbf{J}_{\mathbf{P}_{\partial\mathcal{X}}}(x)(x-\mathbf{P}_{\partial\mathcal{X}}(x))=0$. Therefore, from Lemma \ref{lemma:symmetricJacobian},  we have that $(x-\mathbf{P}_{\partial\mathcal{X}}(x))^\top\textbf{J}_{\mathbf{P}_{\partial\mathcal{X}}}(x)=0$. It follows that the Jacobian of $g$ reduces to:
\begin{equation}
    \textbf{J}_g(x) =\frac{(x_d-\mathbf{P}_{\partial\mathcal{X}}(x))^\top(\textbf{I}_n-\textbf{J}_{\mathbf{P}_{\partial\mathcal{X}}}(x))}{\epsilon^2}.
\end{equation}
At a given undesired equilibrium point $\bar x\in\mathcal{E}$, one has
\begin{equation}
    \textbf{J}_g(\bar x) =\frac{(x_d-\mathbf{P}_{\partial\mathcal{X}}(\bar x))^\top(\textbf{I}_n-\textbf{J}_{\mathbf{P}_{\partial\mathcal{X}}}(\bar x))}{\epsilon^2}.
\end{equation}
Since $(x_d-\mathbf{P}_{\partial\mathcal{X}}(\bar x))$ and $(\bar x-\mathbf{P}_{\partial\mathcal{X}}(\bar x))$ are two collinear vectors, then one has $(x_d-\mathbf{P}_{\partial\mathcal{X}}(\bar x))^\top\textbf{J}_{\mathbf{P}_{\partial\mathcal{X}}}(\bar x)=0$ and
\begin{equation}
    \textbf{J}_g(\bar x) =\frac{(x_d-\mathbf{P}_{\partial\mathcal{X}}(\bar x))^\top}{\epsilon^2}.
\end{equation}
The Jacobian of the control law $u$ at $\bar x$:
\begin{equation}
    \begin{split}
\textbf{J}_u(\bar x)=& -k[\textbf{I}_n+\frac{(\bar x-\mathbf{P}_{\partial\mathcal{X}}(\bar x))(x_d-\mathbf{P}_{\partial\mathcal{X}}(\bar x))^\top}{||\bar x-\mathbf{P}_{\partial\mathcal{X}}(\bar x)||^2} \\
 &+g(\bar x)(\textbf{I}_n-\textbf{J}_{\mathbf{P}_{\partial\mathcal{X}}}(\bar x))].
\end{split}
\end{equation}
We can write
\begin{equation}
    (x_d-\mathbf{P}_{\partial\mathcal{X}}(\bar x))=-||x_d-\mathbf{P}_{\partial\mathcal{X}}(\bar x)||\frac{(\bar x-\mathbf{P}_{\partial\mathcal{X}}(\bar x))}{||\bar x-\mathbf{P}_{\partial\mathcal{X}}(\bar x)||},
\end{equation}
It follows that
\begin{equation}
    \begin{split}
\textbf{J}_u(\bar x)=& -k[\textbf{I}_n-\textbf{A}\frac{||x_d-\mathbf{P}_{\partial\mathcal{X}}(\bar x)||}{||\bar x-\mathbf{P}_{\partial\mathcal{X}}(\bar x)||}\\
 &  +g(\bar x)(\textbf{I}_n-\textbf{J}_{\mathbf{P}_{\partial\mathcal{X}}}(\bar x))],
\end{split}
\end{equation}
where we have defined the matrix $\textbf{A}$ as
\begin{equation}
    \textbf{A}:=\frac{(\bar x-\mathbf{P}_{\partial\mathcal{X}}(\bar x))(\bar x-\mathbf{P}_{\partial\mathcal{X}}(\bar x))^\top}{||\bar x-\mathbf{P}_{\partial\mathcal{X}}(\bar x)||^2},
\end{equation}

\begin{equation}
\textbf{J}_u(\bar x)= -k[\textbf{I}_n-\textbf{A}\frac{||x_d-\mathbf{P}_{\partial\mathcal{X}}(\bar x)||}{||\bar x-\mathbf{P}_{\partial\mathcal{X}}(\bar x)||}+g(\bar x)(\textbf{I}_n-\textbf{J}_{\mathbf{P}_{\partial\mathcal{X}}}(\bar x))].
\end{equation}

We have
\begin{equation}
    g(\bar x)=-\frac{||x_d-\mathbf{P}_{\partial\mathcal{X}}(\bar x)||}{\epsilon}-1<-2,
\end{equation}

then
\begin{align}
 \textbf{J}_u(\bar x))   
     &=-kg(\bar x)[\frac{||x_d-\mathbf{P}_{\partial\mathcal{X}}(\bar x)||}{\epsilon+||x_d-\mathbf{P}_{\partial\mathcal{X}}(\bar x)||}(\textbf{I}_n+\textbf{A})-\textbf{J}_{\mathbf{P}_{\partial\mathcal{X}}}(\bar x)].
\end{align}

If Assumption \ref{assumption:curvatureCondition} holds, the Jacobian of the controller $u$ satisfies
\begin{equation}\label{eq:JacobianController}
    \textbf{J}_u(\bar x) \succ -kg(\bar x)\left(\frac{||x_d-\mathbf{P}_{\partial\mathcal{X}}(\bar x)||}{\epsilon+||x_d-\mathbf{P}_{\partial\mathcal{X}}(\bar x)||}\textbf{A}\right).
\end{equation}
Finally, the Jacobian of the controller $u$ evaluated at $\bar x$ has at least one strictly positive eigenvalue. Thus, according to \cite[Theorem 3.2]{khalil2015nonlinear}, all points $\bar x\in\mathcal{E}$ are unstable.

For {\it item 2)}, we prove that the basin of attraction of the undesired equilibria is a set of measure zero. Firstly, we denote by $\phi_t$ the flow of the closed-loop dynamical system (\ref{eq:closedLoopsystem}), and the stable manifold $\mathcal{S}$ for each undesired equilibrium point which satisfies
    \begin{equation}
        \lim_{t\to\infty}\phi_t(c)=\bar x,\hspace{4mm} \forall c\in\mathcal{S},\hspace{4mm}\text{where }\bar x\in\mathcal{E}. 
    \end{equation}
    The Jacobian of the controller $u$ evaluated at a point $\bar x\in\mathcal{E}$ satisfies (\ref{eq:JacobianController}), therefore, the Jacobian, as said in the proof of Theorem \ref{theorem:unstableEquilibria}, has at least one positive eigenvalue. Hence, the stable manifold $\mathcal{S}$ is at most $(n-1)$-dimensional manifold \cite[The Stable Manifold Theorem, Pg 107]{Perko2001}, and as a result, it is measure zero in the $n$-dimensional space. Since the closed-loop system (\ref{eq:closedLoopsystem}) admits unique solutions, the global stable manifold at $\bar x\in\mathcal{E}$, defined as \cite[Definition 3, Pg 113]{Perko2001}
    \begin{equation}
        W^s(\bar x)=\bigcup_{t\le0}\phi_t(\mathcal{S}),
    \end{equation}
    is also a measure zero set.
\end{proof}

\section{Numerical Simulation}\label{section:NumericalSimulation}
\begin{figure}[t]
    \centering
\includegraphics[width=0.4\columnwidth]{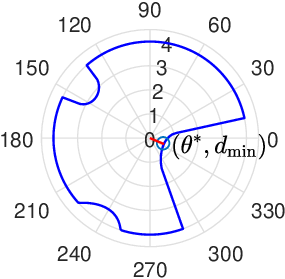}
\includegraphics[width=0.8\columnwidth]{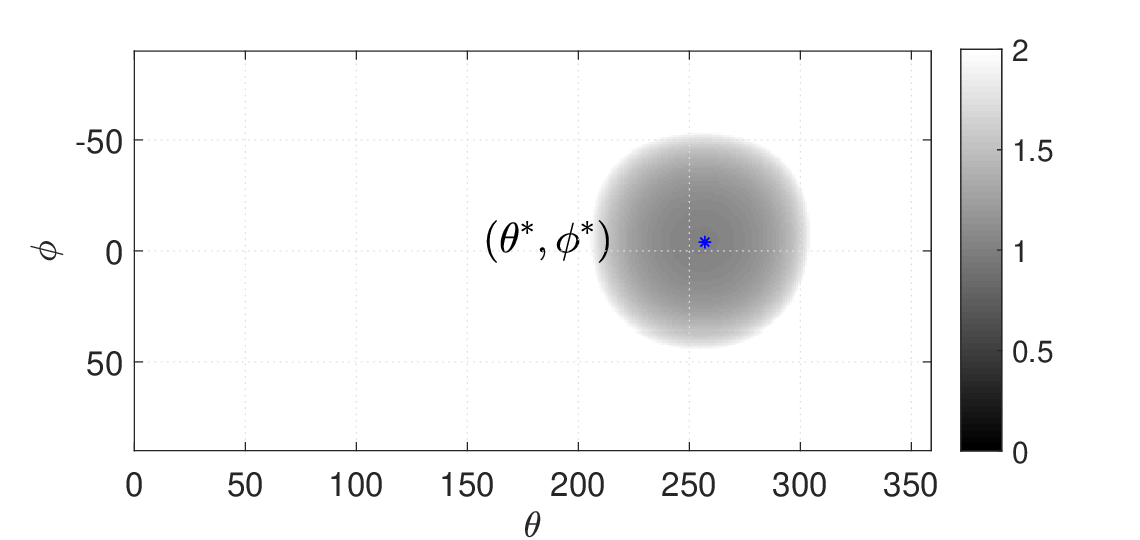}
    \caption{(Top) Example of a 2D LIDAR reading, which represents a polar curve. (Bottom) A presentation of the 3D LIDAR reading in a 2D grayscale map with the angles $\theta$ and $\phi$ as the axis and the color of a point $(\theta,\phi)$ is attributed according to the value of $\rho$, (black) when  $\rho=0$, (white) when $\rho=R_s$ and shades of (gray) in between.} 
    \label{fig:sensor}
\end{figure}
\begin{figure}[t]
    \centering
    \includegraphics[width=1\columnwidth]{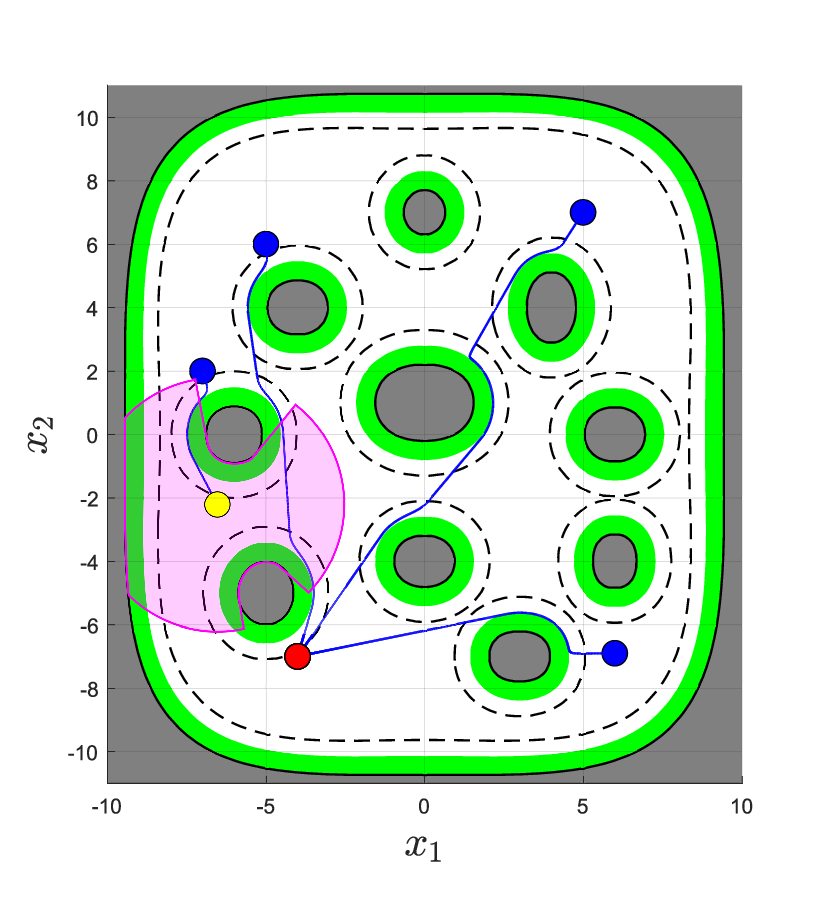}
    \caption{the resulting navigation trajectories, in a 2D environment, of the smooth control law (\ref{eq:smoothControl}) starting at a set of initial positions (blue) away from the goal (red) while avoiding the obstacles (gray). The (green) region around the obstacle (gray) is a dilation for the latest by the parameter $\epsilon$ while the (black) dashed line is a dilation by the parameter $\epsilon'$. The (magenta) area represents the sensor range for the actual position of the robot (yellow).} 
    \label{fig:Case2D}
\end{figure}

\begin{figure}[t]
    \centering
     \includegraphics[trim={9cm 0cm 9cm 0cm},clip,width=1\columnwidth]{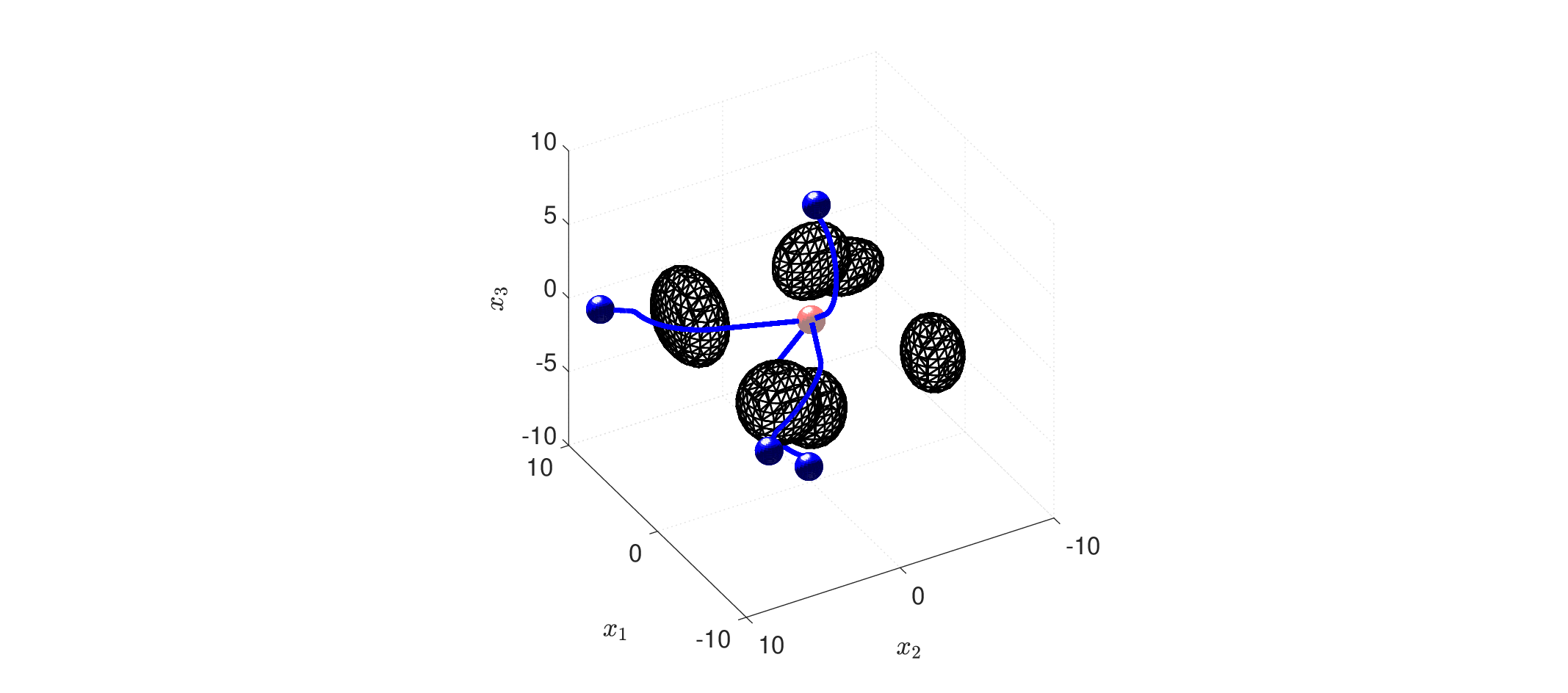}
    \caption{the resulting navigation trajectories, in a 3D environment, of the smooth control law (\ref{eq:smoothControl}) starting at a set of initial positions (blue) away from the goal (red) while avoiding the obstacles (gray).} 
    \label{fig:Case3D}
\end{figure}
In order to demonstrate the robot's safe navigation in an unknown environment, we run some 2D/3D numerical simulations that aim to visualize the ability of the robot to avoid obstacles while moving towards the goal under our smooth controller (\ref{eq:smoothControl}). For the 2D case,  we define the free space $\mathcal{X}_1\subset\mathbb R^2$ as follows,
\begin{align}
    \mathcal{X}_1:=\{x=(x_1,x_2)\in\mathbb R^2 : &f_i(x)\le 0, i\in\{1,..,M\}, \nonumber\\
    &\text{and}\hspace{2mm}g(x)\ge 0\},
\end{align}
where $g(x)=q x_1^{2n} + p x_2^{2n} - D^{2n}$ and$f_i(x)=a_i(x_1-x_{0,i})^2 + b_i(x_2-y_{0,i})^2 + c_i(x_1-x_{0,i})^4+d_i(x_2-y_{0,i})^4-R_{0,i}$. The parameters $x_{0,i},y_{0,i}\in \mathbb R$, $q,p,D,a_i,b_i,c_i,d_i,R_{0,i}\in\mathbb R_{>0}$ and $n\in\mathbb N\setminus\{0\}$ are the obstacles characteristics. To measure the distance of the robot relative to the obstacles, we use a 2D LIDAR range sensor with a limited range $R_s$, which we simulate using a function that returns the polar curve $\rho(\theta;x)$, where $\rho\in[0,R_s]$ and $\theta\in[0,2\pi)$ are the radial distance and the polar angle respectively. To measure the distance between the position $x$ of the robot and the obstacle region, we calculate the minimum value of $\rho$ with respect to $\theta$, and we use the corresponding angle $\theta^*$ to evaluate the vector $\nabla\mathbf{b}_{\complement\mathcal{X}_1}(x)$, such that:
\begin{equation}
    \nabla\mathbf{b}_{\complement\mathcal{X}_1}(x)=(-\cos(\theta^*),-\sin(\theta^*)).
\end{equation}
For this simulation, we take the desired goal at $x_d=(-4,-7)$, the robots radius $R=0.4$, the controller parameters $\epsilon=0.6,\epsilon'=1.1,k=0.5$, and the 2D sensor range $R_s=4$. For the 3D case, we define the free space $\mathcal{X}_2\subset\mathbb R^3$ as follows,
\begin{equation}
    \mathcal{X}_2:=\{x=(x_1,x_2,x_3)\in\mathbb R^3: h_i(x)\le 0,i\in\{1,..,M\}\},
\end{equation}
where $h_i(x)=a_i(x_1-x_{0,i})^2 + b_i(x_2-y_{0,i})^2 + c_i (x_3-z_{0,i}).^2 - R_{0,i}^2$. The parameters $x_{0,i},y_{0,i},z_{0,i}\in \mathbb R$, $a_i,b_i,c_i,R_{0,i}\in\mathbb R_{>0}$ are the obstacles characteristics. We use a function that simulates a 3D LiDAR range sensor and returns a surface defined in the spherical coordinates by the equation $\rho(\theta,\phi;x)$, where $\rho\in[0,R_s]$ is the radial distance, $\theta\in[0,2\pi)$ is the polar angle and $\phi\in[-\pi/2,\pi/2]$ is the azimuthal angle. The vector $\nabla\mathbf{b}_{\complement\mathcal{X}_2}(x)$ is given by
\begin{equation}
     \nabla\mathbf{b}_{\complement\mathcal{X}_2}(x)=-(\cos(\theta^*)\cos(\phi^*),\sin(\theta^*)\cos(\phi^*),\sin(\phi^*)),
\end{equation}
where $\theta^*$ and $\phi^*$ are the angles that corresponds to minimum of $\rho$. For the 3D case, we take the desired goal at $x_d=(0,0,1)$, the robots radius $R=0.8$, the controller parameters $\epsilon=1,\epsilon'=1.4,k=0.5$, and the 3D sensor range $R_s=2$. The figures \ref{fig:Case2D} and \ref{fig:Case3D} illustrate the resulting navigation trajectories for different initial conditions for the robot in the 2D and 3D environments.

\section{Conclusion}\label{section:conclusion}
In this paper, we proposed a sensor-based feedback controller that solves the safe autonomous navigation problem in $n$-dimensional unknown environments. Our controller stabilizes the robot using the nominal control law and switches to avoidance when it comes closer to the obstacles but the transition between the two modes is smooth. For obstacles that satisfy the strong convexity assumption (Assumption \ref{assumption:curvatureCondition}), our controller guarantees almost global asymptotic stability and safe navigation. The fact that our feedback controller uses only range and bearing to the nearest obstacle makes it very suitable for practical real-time implementation using range sensors. Considering robots with higher-order dynamics is an interesting future extension of this work.

\bibliographystyle{ieeetr}
\bibliography{references}
\end{document}